\newcounter{protoCount}
\newcounter{protoList}
\newsavebox{\tmpbox}
\newlength{\protobox}
\newcommand{\bra}[1]{\ensuremath{\langle #1 |}}
\newcommand{\ket}[1]{\ensuremath{| #1 \rangle}}
\newcommand{\ext}[1]{\langle #1 \rangle}
\newcommand*{\cB}{\mathcal{B}}
\newcommand*{\cC}{\mathcal{C}}
\newcommand*{\cE}{\mathcal{E}}
\newcommand*{\cH}{\mathcal{H}}
\newcommand*{\cJ}{\mathcal{J}}
\newcommand*{\cQ}{\mathcal{Q}}
\newcommand*{\cP}{\mathcal{P}}
\def\01{\{0,1\}}
\newcommand{\hil}{\mathcal{H}}
\newcommand{\id}{\mathbb{I}}
\newcommand{\hf}{\frac{1}{2}}
\newcommand{\bop}{\mathbb{B}}
\newcommand{\Complex}{\mathbb{C}}
\newcommand{\algA}{\mathscr{A}}
\newcommand{\algI}{\mathscr{I}}
\newcommand{\algB}{\mathscr{B}}
\newcommand{\algZ}{\mathscr{Z}}
\newcommand{\setA}{\mathcal{A}}
\newcommand{\setB}{\mathcal{B}}
\newcommand{\setZ}{\mathcal{Z}}
\newcommand{\listM}{\mathcal{M}}
\newcommand{\listL}{\mathcal{L}}
\newcommand{\setS}{\mathcal{S}}
\newcommand{\setR}{\mathcal{R}}
\newcommand{\mP}{\mathcal{P}}
\newcommand{\mA}{\mathcal{A}}
\newcommand{\mB}{\mathcal{B}}
\newcommand{\mS}{\mathcal{S}}
\newcommand{\Yao}{\rm Yao}
\newcommand{\CHSH}{\rm CHSH}
\renewcommand{\exp}[1]{\text{e}^{#1}}
\newenvironment{sdp}[2]{
\smallskip
\begin{center}
\begin{tabular}{ll}
#1 & #2\\
subject to
}
{
\end{tabular}
\end{center}
\smallskip
}
\DeclareMathOperator{\comm}{Comm}
\newcommand{\outp}[2]{|#1\rangle\langle#2|}
\newcommand{\inp}[2]{\langle{#1}|\,{#2}\rangle} % inproduct, < | >
\newcommand{\isomorph}{\cong}
\DeclareMathOperator{\Tr}{Tr}
\newcommand{\CC}{\mathbb{C}}
\newcommand{\RR}{\mathbb{R}}
\def\one{\leavevmode\hbox{\small1\normalsize\kern-.33em1}}
\newcommand{\ftv}{\omega^f} % field-theoretic value of Bell inequality
\newcommand{\eps}{\varepsilon}
\newcommand{\aoq}{A_s^a} % local operator on A
\newcommand{\xoq}{X_{s_j}^{a_j}} % local operator on A
\newcommand{\txoq}{\hat{X}_{s_j}^{a_j}} % local operator on A
\newcommand{\hataoq}{\hat A_s^a} % local operator on A
\newcommand{\tildeaoq}{\tilde A_s^a} % local operator on A
\newcommand{\boq}{B_t^b}
\newcommand{\hatboq}{\hat B_t^b}
\newcommand{\tildeboq}{\tilde B_t^b}
\newcommand{\ccone}{\cC_\cP}
\newtheorem{theorem}{Theorem}[section]
\newtheorem{lemma}[theorem]{Lemma}
\theoremstyle{definition}
\newtheorem{definition}[theorem]{Definition}
\theoremstyle{remark}
\begin{document}

\title{{\bf The quantum moment problem\\ and bounds on entangled multi-prover games}}
\author{Andrew C. Doherty\thanks{School of Physical Sciences, The University of Queensland, Queensland 4072,
Australia}
\and
Yeong-Cherng Liang$^*$
\and
Ben Toner\thanks{
Centrum voor Wiskunde en Informatica, Kruislaan 413, 1098 SJ Amsterdam, The Netherlands}
\and
Stephanie Wehner\thanks{
Institute for Quantum Information, California Institute of Technology,
Pasadena, CA 91125, USA}
}

\maketitle
\thispagestyle{empty}

\begin{abstract}
We study the \emph{quantum moment problem}: Given a conditional
probability distribution together with some polynomial constraints,
does there exist a quantum state $\rho$ and a collection of
measurement operators such that (i) the probability of obtaining a
particular outcome when a particular measurement is performed on
$\rho$ is specified by the conditional probability distribution, and
(ii) the measurement operators satisfy the constraints. For example,
the constraints might specify that some measurement operators must
commute.

We show that if an instance of the quantum moment problem is
unsatisfiable, then there exists a certificate of a particular form
proving this. Our proof is based on a recent result in algebraic
geometry, the noncommutative Positivstellensatz of Helton and
McCullough [{\em Trans. Amer. Math. Soc.}, 356(9):3721, 2004].

A special case of the quantum moment problem is to compute the value
of one-round multi-prover games with entangled provers.  Under the
conjecture that the provers need only share states in
finite-dimensional Hilbert spaces, we prove that a hierarchy of
semidefinite programs similar to the one given by Navascu\'es,
Pironio and Ac\'in [{\em Phys. Rev. Lett.}, 98:010401, 2007]
converges to the entangled value of the game. It follows that the
class of languages recognized by a multi-prover interactive proof
system where the provers share entanglement is recursive.
\end{abstract}

\newpage
\setcounter{page}{1}

\section{Introduction}

The study of multi-prover games has led to many exciting results in
classical complexity theory. A one-round multi-prover \emph{cooperative game of incomplete information} is a game played by a verifier against two provers, Alice and Bob. The strategy of the verifier is fixed. He randomly chooses two questions according to some fixed probability distribution and sends one question to each prover. Alice and Bob then each return an answer to the verifier. The verifier decides whether to accept these answers on the basis of some pre-defined rules of the game that specify whether the given answers are winning answers for the questions sent. To win the game, Alice and Bob may thereby agree on any strategy beforehand, but they may no longer communicate once the game has started. The maximum probability with which Alice and Bob can cause the verifier to accept is known as the \emph{value} of the game. A simple example is the well-known CHSH game~\cite{chsh:nonlocal,cleve:nonlocal}. In this case, the questions and answers are bits. The verifier chooses questions $s \in \01$ and $t \in \01$ uniformly at random and sends $s$ to Alice and $t$ to Bob. In order to win the game, Alice and Bob must reply with bits $a, b \in \01$ such that $s\wedge t = a \oplus b$, i.e., the logical AND of $s$ and $t$ should be equal to the XOR of $a$ and $b$. It is straightforward to verify that the CHSH game has value $3/4$.

Interactive proof systems have received considerable attention since
their introduction by Babai~\cite{babai:ip} and Goldwasser, Micali
and Rackoff~\cite{goldwasser:ip} in 1985.  Of special interest to us
are proof systems with \emph{multiple}
provers~\cite{babai:ipNexp,benor:ip,cai:ip,
feige:twoProverOneRound,feige:mip,lapidot:ip} as introduced by
Ben-Or, Goldwasser, Kilian and Widgerson~\cite{benor:ip}, which can
be described in terms of multi-prover games between a verifier, and
two or more provers. Whereas the provers are computationally
unbounded, the verifier is limited to probabilistic polynomial time.
Both the provers and the verifier have access to a common input
string $x$. The goal of the provers is to convince the verifier that
$x$ belongs to a pre-specified language $L$.  The verifier's aim, on
the other hand, is to determine whether the provers' claim is indeed
valid.  In each round, the verifier sends a poly$(|x|)$ size query to
the provers, who return a polynomial size answer. At the end of the
protocol, the verifier either accepts, meaning that he concludes $x
\in L$, or rejects, based on the messages exchanged and his own
private randomness. A language $L$ has a multi-prover interactive
proof system if there is a protocol such that, if $x \in L$, there
exist answers the provers can give which will cause the verifier to
accept with high probability. However, if $x \not \in L$, then there
exists no strategy for the provers that will only cause the verifier
to accept, except with very low probability.
Here, $x$ and $L$ lead to particular game.
Let MIP\index{MIP} denote the class of languages having a
multi-prover interactive proof system. It has been shown that
classical two-prover interactive proof systems are just as powerful
as proof systems involving more than two provers.
Indeed, Babai, Fortnow and
Lund~\cite{babai:ipNexp}, and Feige and Lov\'asz~\cite{feige:mip}
have shown that a language is in NEXP if and only if it has a
\emph{two}-prover one-round proof
system, i.e., MIP $=$ NEXP.

\subsection{Multi-prover games with entanglement}

In this paper, we study multi-prover games in a quantum setting. In
particular, we allow Alice and Bob to share an entangled quantum
state as part of their strategy. After receiving their questions, the
provers may perform any local measurement on their part of the
entangled state, and decide on an answer based on the outcome of
their measurement. All communication between the verifier and the
provers remains classical. It turns out that sharing entanglement can
increase the probability that the provers can cause the verifier to
accept, an effect known as quantum \emph{nonlocality}. For example,
if the provers share a maximally entangled state of two qubits they
can win the CHSH game (cause the verifier to accept) with probability
$p^*_{\mbox{\tiny CHSH}}\approx 85\% > 3/4$.

We write $\mbox{MIP}^*$ for the set of languages that have
interactive proofs with entangled provers. Very little is known about
MIP$^*$. Most importantly, it was not known before our work whether there exists an algorithm of any complexity for deciding membership in
MIP$^*$, except for extremely restricted classes of games. In particular, if we restrict to games where Alice and Bob each answer a single bit $a,b \in \01$, and the verifier only looks at the XOR of these two bits, then the (entangled) value of the game can be computed in time polynomial in the number of questions~\cite{tsirel:separated,cleve:nonlocal}.  Let
$\oplus\mbox{MIP}[2]$\index{MIP[2]} denote the restricted class where
the verifier's output is a function of the XOR of two binary answers.
Then $\oplus\mbox{MIP}^*[2] \subseteq \mbox{EXP}$~\cite{cleve:nonlocal}, while it is known that
$\oplus\mbox{MIP} = \mbox{NEXP}$, for certain completeness and
soundness parameters~\cite{Hastad:502098}, i.e., the resulting
proof system is significantly weakened if the provers are allowed to
share entanglement. In fact, such a proof system can even be
simulated using just a single quantum prover, i.e.,
$\oplus\mbox{MIP}^*[2] \subseteq \mbox{QIP}(2) \subseteq
\mbox{EXP}$~\cite{wehner05c,kitaev&watrous:qip}.

Unfortunately, very little is known for more general games where we allow shared entanglement between the provers, and where Alice and Bob give much longer answers, or when we have more than two provers. Unlike in the classical case, general multi-prover games are not equivalent to two-prover games when the provers share entanglement~\cite{julia:mip}. For two-prover unique games (where for each pair of questions there exists exactly one pair of winning answers), it is known that we can approximate the value of a two-prover game to within a certain accuracy in polynomial time~\cite{oded:unique}. Masanes~\cite{masanes:nonlocal} has shown how to compute the value of multi-prover games where the questions to, and the answers from each prover are bits. But even for very small games with a very limited number of questions, the entangled value is typically unknown~\cite{massar:tsirel}.

Assuming that the provers share quantum entanglement is a reasonable
model because it captures the properties of a multi-prover game that
a verifier can enforce physically: while the verifier can enforce the
condition that the provers cannot communicate by ensuring that they are
spacelike-separated, he has no way to ensure that provers in a
quantum universe do not share entanglement. Multi-prover games with
entangled provers are also known as \emph{non-local games} with
entanglement. The $\emph{entangled value}$ of a game is the maximum
probability with which Alice and Bob can win using entanglement.
Here, we are concerned with the following question: How can we
compute the entangled value of non-local games with multiple provers?
And, how can we decide membership of MIP$^*$?

\subsection{Results}

\paragraph{Quantum moment problem.}

To reach our goal, we first introduce the \emph{quantum moment
problem} that is a generalization of our problem. Informally, the
quantum moment problems asks whether given a conditional probability
distributions and some polynomial constraints on observables, we can
find a quantum state and quantum measurements satisfying the
constraints, that provide us with the required probabilities. We may
use the constraints to impose certain restrictions on the form of our
quantum measurements. For example, we may want to demand that two
measurement operators act on two or more distant subsystems
independently. Determining whether there is an entangled strategy for
a multi-prover game that achieves a certain winning probability is a
special case of the quantum moment problem.

Other special cases of the quantum moment problem include the
so-called classical marginal problem~\cite{pitowsky91:_correl}, which
asks whether given certain marginal distributions, we can find a
joint distribution that has the desired marginals. Our problem is
also closely related to the \emph{quantum} marginal problem in which
the aim is to find a density matrix for a multipartite quantum system
that is consistent with a specified set of reduced density matrices
for specific subsystems. This problem is QMA-complete and has
attracted a lot of interest recently~\cite{matthias:qma}. A
special case is $N$-representability, an important problem with a
long history in quantum chemistry~\cite{klyachko}.
One key difference between our quantum moment problem and the quantum
marginal problem is that in the latter case  the dimension of the
quantum state, and its various subsystems, is specified. In the
quantum moment problem the aim is to find a state satisfying the
given constraints in a quantum system of any, possibly infinite,
dimension. Finally, it may also be possible to treat games with
\emph{quantum} verifier within the framework of the quantum moment
problem.

\paragraph{Refuting unsatisfiable instances.}
We describe a general way of proving that an instance of a quantum
moment problem is unsatisfiable. The proof follows from a recent result of
Helton and McCullough~\cite{helton04:_posit_for_non_commut_polyn}, a
Positivstellensatz for polynomials in noncommuting variables. The
choice of polynomials will define a particular instance of the
quantum moment problem, where the variables correspond to measurement
operators. In Helton and McCullough's result the noncommuting
variables are required to satisfy certain polynomial equality and
inequality constraints but can be evaluated in any quantum system,
even one of infinite dimensions. Informally, the Positivstellensatz states that any
such polynomial that is positive can be written a sum of squares, a
form that makes it obvious that the polynomial is positive. By
positive we mean that, whenever the constraints are satisfied, the polynomial is positive semidefinite, i.e. it has a positive expectation value
for all quantum states in whatever quantum system.
Such a representation as a sum of squares acts as a
certificate for the unsatisfiability of an instance of a quantum
moment problem. Certificates of this kind have often been used in the
theoretical physics literature to place very general bounds on
quantum correlations (see for
example~\cite{glauber1963a}).
Helton and McCullough's result shows that
such certificates are all that is ever required to demonstrate that
an instance of the quantum moment problem is unsatisfiable.

\paragraph{Tensor products and commutation.}
In order to apply the Positivstellensatz to obtain bounds on the
entangled values of two-prover games, we need to incorporate a
constraint in the corresponding quantum moment problem that ensures
Alice and Bob's measurements act on different subsystems $\hil_A$ and $\hil_B$. When Alice
and Bob share a quantum system of some finite dimension, this means
that one demands that the Hilbert space $\cH$ describing this system
decomposes as $\hil = \hil_A \otimes \hil_B$. Alice and Bob's
measurements should be of the form $A_s \otimes B_t$ with $A_s \in \hil_A$
and $B_t \in \hil_B$ for all questions $s$ and $t$. Unfortunately, we
can only apply the Positivstellensatz when the constraints are
polynomials in $A_s$ and $B_t$. Thus we need an additional trick to impose an
explicit tensor product structure.
To get around this problem, we
demand that for all $s$ and $t$ we have $[A_s,B_t]=0$, i.e.,
that all measurement operators of Alice commute with all those of
Bob. If the Hilbert space is finite-dimensional, then imposing the
commutativity constraints is actually \emph{equivalent} to demanding
a tensor product structure. This result is well-known in the
mathematical physics literature~\cite{summers:qftIndep}. Here, we
 provide a simple version of this argument accessible
from a computer science perspective, which directly applies to our
analysis of multi-prover games.

From a physics point of view, however, the usual requirement on
observables that can be measured in space-like separated regions is
that they should \emph{commute}, not that they should have a tensor
product factorization. Indeed, this commutativity requirement on
local observables is regarded by many as an axiom that should be
satisfied by any reasonable quantum mechanical theory of
nature~\cite{haag1996a}. Unfortunately, when the algebra of
observables cannot be represented on a finite-dimensional Hilbert
space it is an open question whether this commutativity property
implies the existence of a tensor product factorization. Our results
will provide bounds on the values of multi-prover games that are
valid for all quantum systems whenever the observables of different
players commute. We will refer to
the maximum probability of winning a game $G$ with (possibly)
infinite-dimensional operators satisfying commutativity constraints
as the
\emph{field-theoretic value} $\ftv(G)$ of the game.
It is an open question whether this is the same as
the usual entangled value of the game. Since MIP$^*$ was defined with
a tensor product structure in mind, we here define the class MIP$^f$,
where the tensor products are replaced by the commutative requirement. The
class MIP$^f$ seems more appropriate to our main motivation of
studying the power of multi-prover games where the provers are only
limited by what they can achieve physically. Restricting Alice and
Bob to sharing finite-dimensional systems does not seem natural from
a physical perspective.

\paragraph{A hierarchy of semidefinite programs.}
We know that the Positivstellensatz leads to certificates that tell
us when a particular quantum moment problem is unsatisfiable. But how
can we find such certificates? If we place a bound on the size of the
certificate, then the problem of determining whether there exists a
certificate of that size can be formulated as a semidefinite program
(SDP)~\cite{L.Vandenberghe:SR:1996,Boyd:04}.  In particular,
searching for certificates of increasing size yields a hierarchy of
SDPs. The resulting hierarchy is very similar to the one presented in
a groundbreaking paper of Navascu\'es, Pironio and
Ac\'in~\cite{navascues:_bound}, which partly motivated this work.

In many applications, including multi-prover games, we are not only
interested in whether a specific instance of the moment problem is
satisfiable but in finding the best possible bound on some linear
combination of moments. Once again fixing the size of the
certificates of infeasibility straightforwardly leads to a hierarchy
of SDPs that provide progressively tighter bounds. For multi-prover
games $G$, it was previously not known whether the solutions to this
hierarchy of SDPs converged to the entangled value of the game, which
we denote by $\omega^*(G)$. Here we \emph{almost} show this. What we
actually show is that the hierarchy converges to the
field-theoretic value (see above) of a non-local game $G$.
In the language of the quantum moment problem, we wish to know if there exists an entangled strategy for $G$
such that the provers win with some fixed probability $p$.
However, the
Positivstellensatz only yields a certificate that there is no
entangled strategy that wins with probability $p$ if there is also no
such strategy even \emph{with infinite-dimensional
  measurement operators}. If the measurement operators are
infinite-dimensional, then the commutativity constraints do not
necessarily imply the existence of a tensor product
structure.
In other words, we show that our hierarchy converges to the field-theoretic value of the game.

\paragraph{MIP$^*$ is recursive}

Since our hierarchy converges, we can compute the value of an
entangled game and hence obtain an algorithm for deciding membership
of MIP$^*$ (under the assumption that the optimal value is achieved
with finite-dimensional operators) and of MIP$^f$. This implies that
these classes are recursive.

\paragraph{Examples: The $I_{3322}$ inequality and Yao's inequality.}
Finally, we demonstrate the power of our technique by providing an
extremely simple, algorithmically constructed, certificate bounding
the value of a two-party Bell inequality~\cite{Bell:64a} known as the $I_{3322}$ inequality~\cite{D.Collins:JPA:2004}, and a multi-player non-local game suggested by Yao and collaborators~\cite{yao:inequality}.

\subsection{Open Questions}
With respect to the above discussion, it would be interesting to
know whether there are games $G$ such that $\omega^*(G)$ is strictly
less than $\ftv(G)$. Can it really help the provers to have
infinite-dimensional systems when the number of questions and answers
in the game are finite? One way to establish that there is no
advantage to having infinite-dimensional systems would be to `round'
the SDP hierarchy directly to a quantum strategy with finite
entanglement, bypassing the (nonconstructive) Positivstellensatz
altogether. For XOR-games, the first level of our hierarchy is tight
and it is well-known how a solution of the SDP can be transformed
into a quantum strategy via so-called Tsirelson's vector
construction~\cite{tsirel:original, tsirel:separated,tsirel:hadron}.
However, there exist many non-local games, for which the first level
of the hierarchy does not provide us with the optimal value of the
game, but merely gives us an upper bound.  This fact alone shows that
for general games, we cannot find such a nice embedding of vectors
into observables as can be done for XOR-games. However, something
similar may still be possible for restricted classes of games, exhibiting
a likewise special structure.

We also do not establish anything about the rate of convergence of
the SDP hierarchy. In some numerical experiments with small games,
the low levels of the SDP hierarchy do yield optimal solutions.
Establishing this in general would provide an upper bound on
$\mbox{MIP}^*$. We have made partial progress on this question by
proving convergence for a particular hierarchy of SDPs.

\subsection{Related work}
In Ref.~\cite{navascues:_bound}, a paper which partly inspired this
work, Navascu\'es, Pironio, and Ac{\'\i}n (NPA), defined a closely
related semidefinite programming hierarchy.  Subsequently, and
independently of us, NPA have proved that their semidefinite
programming hierarchy converges to the field-theoretic value of the
game~\cite{navascues08:long}. Our paper and theirs are complementary:
While our work emphasizes the connection with Positivstellensatz of
Helton and McCullough, NPA prove convergence directly. Their proof
has a number of advantages: most notably, when their hierarchy
converges to the field-theoretic value of the game at a finite level,
NPA obtain a bound on the dimension of the state required to
reproduce the correlations.
NPA have also shown that their new technique for proving
convergence can be extended to general polynomial optimization
problems in noncommutative variables~\cite{navascues08:general}.

Finally, our techniques have recently extended by
Ito, Hirotada, and Matsumoto to the case of games with quantum messages between
verifier and provers~\cite{Ito08}.

\subsection{Outline}

In Section~\ref{prelim}, we provide an introduction to non-local
games including all necessary definitions. Section~\ref{qmpDef} then
defines the quantum moment problem, and Section~\ref{section:tools} introduces our main tools.
In particular, Section~\ref{section:commutation} provides an explanation of why
we obtain a tensor product structure from commutation relations, and in
Section~\ref{section:satz}, we show that if a quantum moment problem
is unsatisfiable, we can find certificates of this fact using the
Positivstellensatz. We then use these tools in our SDP hierarchy in
Section~\ref{findingBounds} and conclude in
Section~\ref{Sec:Examples} with some explicit examples.

\section{Preliminaries}\label{prelim}

\subsection{Notation}

We assume general familiarity with the quantum
model~\cite{nielsen&chuang:qc}. In the following, we use $A^\dagger$
to denote the conjugate transpose of a matrix $A$.
A matrix is \emph{Hermitian} if and only if $A^\dagger = A$. We write
$A \geq 0$ to indicate that a matrix $A$ is \emph{positive
semidefinite}, i.e., it is Hermitian and has no negative eigenvalues.
We also use $A =
0$ to express that $A$ is the all-zero matrix and $A \neq 0$ to
indicate that $A$ has at least one non-zero entry. The $(i,j)$--entry
of $A$ will be denoted by $[A]_{i,j}$. For two matrices $A$ and $B$
we write their commutator as $[A,B] = AB - BA$. We use $\hil$ to
denote a Hilbert space and $\hil_{k}$ the Hilbert space belonging to
subsystem $k$. $\id_{k}$ is the identity on system $k$, and
$\bop(\hil)$ denotes the set of all bounded operators on the Hilbert
space $\hil$. Unless stated otherwise, we take all systems to be
finite-dimensional. We will also employ the shorthand
$$
\bop(\hil)^{\times n} := \underbrace{\bop(\hil) \times \ldots \times \bop(\hil)}_{n}
$$
for the $n$-fold Cartesian product of $\bop(\hil)$,
and let $[n] := \{1,\ldots,n\}$. Furthermore, we will use $|\setS|$
and $|\listL|$ to denote the number of elements of a set $\setS$ and
list $\listL$ respectively.

For the purpose of subsequent discussion, we now note that a
Hermitian polynomial $p(X)$ in noncommutative variables $X =
(X_1,\ldots,X_k)$ is a \emph{sum of squares} (SOS) if there exist
polynomials (matrices) $r_j$ of appropriate dimension such that
$$
p(X) = \sum_j r_j^\dagger r_j.
$$
It is important to note that if $p(X)$ is an SOS polynomial, it is
also a positive semidefinite matrix, i.e., $p(X) \geq 0$.

\subsection{Games}

As an example application of the quantum moment
problem, we will consider cooperative games among $n$ parties. For
simplicity, we first describe the setting for only two parties,
henceforth called Alice and Bob. A generalization is straightforward.
Let $S$, $T$, $A$ and $B$ be finite sets, and $\pi$ a probability
distribution on $S \times T$. Let $V$ be a predicate on $S \times T
\times A \times B$. Then $G = G(V,\pi)$ is the following two-person
cooperative game: A pair of questions $(s,t) \in S \times T$ is
chosen at random according to the probability distribution $\pi$.
Then $s$ is sent to Alice, and $t$ to Bob. Upon receiving $s$, Alice
has to reply with an answer $a \in A$. Likewise, Bob has to reply to
question $t$ with an answer $b \in B$. They win if $V(s,t,a,b) = 1$
and lose otherwise. Alice and Bob may agree on any kind of strategy
beforehand, but they are no longer allowed to communicate once they
have received questions $s$ and $t$. The \emph{value} $\omega(G)$ of
a game $G$ is the maximum probability that Alice and Bob win the
game. In what follows, we will write $V(a,b|s,t)$ instead of
$V(s,t,a,b)$ to emphasize the fact that $a$ and $b$ are answers given
questions $s$ and $t$.

Here, we are particularly interested in non-local games and where
Alice and Bob are allowed to share an arbitrary entangled state
$\ket{\psi}$ to help them win the game. Let $\hil_A$ and $\hil_B$
denote the Hilbert spaces of Alice and Bob respectively. The state
$\ket{\psi} \in \hil_A \otimes \hil_B$ is part of the quantum
strategy that Alice and Bob can agree on beforehand. This means that
for each game, Alice and Bob can choose a specific $\ket{\psi}$ to
maximize their chance of success. In addition, Alice and Bob can
agree on quantum measurements where we may without loss of generality
assume that these are projective
measurements~\cite{cleve:nonlocal}.\footnote{By Neumark's theorem,
any generalized measurements described by positive-operator-valued measure can be implemented as projective
measurements in some higher dimensional Hilbert space. See, for
example, pp. 285 of Ref.~\cite{A.Peres:Book:1995}.} For each $s \in
S$, Alice has a projective measurement described by $ \{A_s^a: a \in
A\} $ on $\hil_A$. For each $t \in T$, Bob has a projective
measurement described by $ \{B_t^b: b \in B\} $ on $\hil_B$. For
questions $(s,t) \in S \times T$, Alice performs the measurement
corresponding to $s$ on her part of $\ket{\psi}$ which gives her
outcome $a$. Likewise, Bob performs the measurement corresponding to
$t$ on his part of $\ket{\psi}$ with outcome $b$. Both send their
outcome, $a$ and $b$, back to the verifier. The probability that
Alice and Bob answer $(a,b) \in A \times B$ is then given by
$$
\bra{\psi}A_s^a \otimes B_t^b\ket{\psi}.
$$
We can now define:

\begin{definition}\label{definition:1}
The \emph{entangled value} of a two-prover game with classical
verifier $G = G(\pi, V)$ is given by:
\begin{align}\label{Eq:omega*}
    \omega^*(G) =& \lim_{d \to \infty}
    \max_{\stackrel{\ket \psi \in \CC^d \otimes \CC^d}{\|\,\ket\psi\,\| =1}}
\max_{\aoq, \boq}
    \sum_{a,b,s,t} \pi(s,t) V(a,b|s,t) \bra \psi \aoq \otimes \boq \ket \psi,
\end{align}
where $\aoq \in \bop(\hil_A)$ and $\boq \in \bop(\hil_B)$ for some
Hilbert space $\hil = \hil_A \otimes \hil_B$,
satisfying $\aoq,\boq \geq 0$, $\sum_a \aoq = \id_A$, $\sum_b
\boq = \id_B$ for all $s \in S$ and $t \in T$.
\end{definition}

We also define a more general version of this statement, which will
provide an upper bound to the quantum value of the game:

\begin{definition}
\label{Dfn:ftv} The \emph{field-theoretic value} of a two-prover game
with classical verifier $G = G(\pi, V)$ is given by:
\begin{align}
    \ftv(G) =& \sup_{\aoq, \boq}
\Bigl\|
\sum_{a,b,s,t} \pi(s,t) V(a,b|s,t)
    \aoq \boq \Bigr\|,
\end{align}
where $\|O\|$ is the operator norm of $O$, $\aoq \in \bop(\hil)$ and $\boq\in \bop(\hil)$ for some Hilbert
space $\hil$,
satisfying $\aoq, \boq \geq 0$,
$\sum_a \aoq = \sum_b \boq = \id$ for all $s,t$, and $[\aoq, \boq ] =
0$ for all $s\in S$, $t \in T$, $a \in A$, and $b \in B$.
\end{definition}

\begin{lemma}
  \label{lemma:1}
    Let $G = G(\pi, V)$ be a two-prover game with classical verifier.
    Then $\omega^*(G) \leq\ftv(G)$.
\end{lemma}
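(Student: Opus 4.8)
The plan is to take an arbitrary finite-dimensional entangled strategy appearing in the definition of $\omega^*(G)$ and reinterpret it verbatim as a commuting field-theoretic strategy on the \emph{same} Hilbert space, losing nothing in the process. Concretely, I would fix a dimension $d$, a unit vector $\ket\psi \in \CC^d \otimes \CC^d$, and projective measurements $\{A_s^a\}$ on $\hil_A = \CC^d$ and $\{B_t^b\}$ on $\hil_B = \CC^d$. Setting $\hil = \hil_A \otimes \hil_B$, define $\tilde A_s^a := A_s^a \otimes \id_B$ and $\tilde B_t^b := \id_A \otimes B_t^b$ as elements of $\bop(\hil)$.

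First I would check that $(\tilde A_s^a, \tilde B_t^b)$ is an admissible choice in the supremum defining $\ftv(G)$ in Definition~\ref{Dfn:ftv}: each operator is positive semidefinite (a tensor product of a positive operator with the identity), one has $\sum_a \tilde A_s^a = \id_A \otimes \id_B = \id$ and likewise $\sum_b \tilde B_t^b = \id$, and the commutators vanish, $[\tilde A_s^a, \tilde B_t^b] = (A_s^a \otimes \id_B)(\id_A \otimes B_t^b) - (\id_A \otimes B_t^b)(A_s^a \otimes \id_B) = A_s^a \otimes B_t^b - A_s^a \otimes B_t^b = 0$, for all $s,t,a,b$. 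Hence $\hil$ together with these operators is legal for $\ftv(G)$.

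Next I would relate the two objective functions. The bias operator $O := \sum_{a,b,s,t} \pi(s,t) V(a,b|s,t)\, \tilde A_s^a \tilde B_t^b = \sum_{a,b,s,t} \pi(s,t) V(a,b|s,t)\, A_s^a \otimes B_t^b$ is a nonnegative linear combination (since $\pi(s,t) \geq 0$ and $V(a,b|s,t) \in \{0,1\}$) of positive semidefinite operators $A_s^a \otimes B_t^b$, so $O \geq 0$ and in particular $O$ is Hermitian. Therefore the $\omega^*$-objective for this strategy equals $\bra\psi O \ket\psi \leq \lambda_{\max}(O) = \|O\| \leq \ftv(G)$, where the last step uses that $(\tilde A_s^a, \tilde B_t^b)$ is admissible in the supremum defining $\ftv(G)$. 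Since this holds for every $d$ and every choice of $\ket\psi$ and measurements, I would take the maximum over unit vectors and measurements and then the limit $d \to \infty$ on the left, which leaves the uniform upper bound $\ftv(G)$ untouched, yielding $\omega^*(G) \leq \ftv(G)$.

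There is no substantial obstacle here; the argument is essentially bookkeeping. The only points needing (minor) care are verifying that the embedded operators satisfy all three constraints of Definition~\ref{Dfn:ftv} — positivity, completeness, and commutativity — and observing that the quadratic form $\bra\psi O \ket\psi$ is controlled by the operator norm of $O$, which is immediate once one notes that $O \geq 0$, so that its largest eigenvalue coincides with $\|O\|$.
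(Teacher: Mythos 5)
Your proof is correct and follows essentially the same route as the paper: embed a finite-dimensional tensor-product strategy into the field-theoretic framework via $\tilde A_s^a = A_s^a\otimes\id_B$, $\tilde B_t^b = \id_A\otimes B_t^b$, verify admissibility, and bound the quadratic form by the operator norm. The only cosmetic difference is that you take the maximum over $d$-dimensional strategies and then the limit, whereas the paper runs an $\eps$-argument; these are interchangeable, and your observation that $O\geq 0$ (while true) is not strictly needed since $\bra\psi M\ket\psi\leq\|M\|$ already holds for any Hermitian $M$.
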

\begin{proof}
Let $\eps > 0$. Choose $d$ sufficiently large so that there is a
normalized state $\ket \psi$ and operators $\aoq$, $\boq$ defining a
strategy with winning probability at least $\omega^*(G) - \eps$. Let
$\hataoq = \aoq \otimes \id_B$ and $\hatboq = \id_A \otimes \boq$.
Then $\hataoq$ and $\hatboq$ are positive semidefinite operators on
$\CC^{d^2}$ satisfying all the conditions in
Definition~\ref{Dfn:ftv}. Finally,
\begin{align*}
    \ftv(G) =& \sup_{\tilde{\aoq}, \tilde{\boq}} \Bigl\|\sum_{a,b,s,t} \pi(s,t) V(a,b|s,t)
    \tilde{\aoq} \tilde{\boq} \Bigr\|\\
    \geq&\, \Bigl\|\sum_{a,b,s,t} \pi(s,t) V(a,b|s,t)
    \hataoq \hatboq \Bigr\|\\
    \geq&\, \bra \psi \Bigl( \sum_{a,b,s,t} \pi(s,t) V(a,b|s,t)
    \hataoq \hatboq \Bigr) \ket \psi\\
    \geq&\, \omega^*(G) - \eps.
\end{align*}
Since $\eps$ was arbitrary, the result follows.
\end{proof}

In our examples, we will sometimes use the term \emph{Bell inequality}\cite{Bell:64a} to refer to a particular non-local game. This is an equivalent formulation, where we only consider terms of the form $\bra{\psi}\aoq\boq\ket{\psi}$. The value of the game can then be
obtained by averaging. In inequalities where Alice and Bob have, respectively, two measurement outcomes for each possible choice of measurement setting (i.e., $A = B = \01$), their measurements can be described by observables of the form $A_s = A_s^0 - A_s^1$ and $B_t = B_t^0 - B_t^1$ respectively. In this case, we state inequalities in the form of the observables $A_s$ and $B_t$ where we will use the shorthand $\langle A_sB_s \rangle = \bra{\psi}A_s B_s\ket{\psi}$.

Note that it is straightforward to extend the above definitions to
the setting involving multiple players, but the resulting terms will
be much harder to read. When considering games among $N$ players
$P_1,\ldots,P_N$, let $S_1,\ldots,S_N$ and $A_1,\ldots,A_N$ be finite
sets corresponding to the possible questions and answers
respectively. Let $\pi$ be a probability distribution on $S_1 \times
\ldots \times S_N$, and let $V$ be a predicate on $A_1 \times \ldots
\times A_N \times S_1 \times \ldots \times S_N$. Then $G = G(V,\pi)$
is the following $N$-player cooperative game: A set of questions
$(s_1,\ldots,s_N) \in S_1 \times \ldots \times S_N$ is chosen at
random according to the probability distribution $\pi$. Player $P_j$
receives question $s_j$, and then responds with an answer $a_j \in A_j$. The players win if and only if $V(a_1,\ldots,a_N|s_1,\ldots,s_N) = 1$. Let $\ket{\psi}$ denote the players' choice of state, and let
$X_j := \{X_{s_j}^{a_j}\mid a_j \in A_j\}$ denote the positive-operator-valued measure(ment) (POVM) of player $P_j$ for question $s_j \in S_j$, i.e., $\sum_{a_j} X_{s_j}^{a_j}=\id_j$ and
$X_{s_j}^{a_j}\ge 0$ for all $a_j$. The value of the game can now be
written as
\begin{equation*}
    \omega^*(G) =\lim_{d\to\infty}
    \max_{\stackrel{\ket \psi \in \left(\CC^d\right)^{\otimes N}}{\|\,\ket\psi\,\| =1}}
    \max_{X_1,\ldots,X_{N}} \!\!\sum_{s_1,\ldots,s_N}\!\!\!\pi(s_1,\ldots,s_N)\!\!
    \sum_{a_1,\ldots,a_N}\!\!\!\!V(a_1,\ldots,a_N|s_1,\ldots,s_N)
    \bra{\psi} X_{s_1}^{a_1} \otimes \ldots \otimes X_{s_N}^{a_N} \ket{\psi},
\end{equation*}
where the maximization is taken over all legitimate POVMs $X_j$ for
all $j \in [N]$. Similarly, we can now write the field-theoretic
value of the game as
\begin{equation*}
    \ftv(G) = \sup_{X_1,\ldots,X_N} \Bigl\|
    \sum_{s_1,\ldots,s_N}\!\!\! \pi(s_1,\ldots,s_N) \sum_{a_1,\ldots,a_N}\!\!\!
    V(a_1,\ldots,a_N|s_1,\ldots,s_N)\,X_{s_1}^{a_1}\ldots X_{s_N}^{a_N}
    \bigr\|,
\end{equation*}
where we now have $\sum_{a_j} X_{s_j}^{a_j}=\id$ for all $a_j,s_j,j$
and $[X_{s_j}^{a_j},X_{s_{j'}}^{a_{j'}}]=0$ for all $j\neq j'$.

\subsection{Proof systems}

Interactive proof systems can be phrased as such games. For
completeness, we here provide a definition of MIP. We refer to the
introduction and the previous section for an explanation of the notions of a tensor
product form vs. commutation relations.

\begin{definition}
For $0 \leq s < c \leq 1$, let $\oplus\mbox{MIP}^*_{c,s}[k]$ denote
the class of all languages $L$ recognized by a classical $k$-prover
interactive proof system with entanglement such that:
\begin{itemize}
\item The interaction between the verifier and the provers is
    limited to one round and classical communication. The
    verifier chooses $k$ questions from a finite set of possible
    questions, according to a fixed probability distribution
    known to the provers, and sends one question to each prover. Afterwards,
the provers may perform any measurement that
    has tensor product form on a shared state $\ket{\psi}$ that
    they have chosen ahead of time. Each prover returns an answer
    to the verifier, whose decision function is known to the
    provers.
\item If $x \in L$ then there exists a strategy for the provers
    for which the probability that the verifier accepts is at
    least $c$ (the \emph{completeness} parameter).
\item If $x \notin L$ then, whatever strategy the $k$ provers
    follow, the probability that the verifier accepts is at most
    $s$ (the \emph{soundness} parameter).
\end{itemize}
\end{definition}

\begin{definition}
For $0 \leq s < c \leq 1$, let $\oplus\mbox{MIP}^f_{c,s}[k]$ denote
the class corresponding to a modified version of the previous
definition: here we merely ask that the measurements operators
between the different players commute.
\end{definition}

\section{The quantum moment problem}\label{qmpDef}

\subsection{General form}
Let us now state the quantum moment problem in its most general form,
before explaining its connection to non-local games. Intuitively, the
quantum moment problem states that given a certain probability
distribution, is it possible to find quantum measurements and a state
that provide us with such a distribution?
\begin{definition}[\textbf{Quantum moment problem}]
Given a list of numbers $\listM = (m_i \mid m_i \in [0,1])$, a set of
polynomial equations $\setR = \{r = 0\mid r: \bop(\hil)^{\times
|\listM|} \rightarrow \bop(\hil)\}$, and polynomial inequalities
$\setS = \{s \geq 0 \mid s: \bop(\hil)^{\times |\listM|} \rightarrow
\bop(\hil)\}$, does there exist said Hilbert space $\hil$, operators
$M_i \in \bop(\hil)$ and a state $\rho \in \bop(\hil)$ such that
\begin{enumerate}
\item For all $m_i \in \listM$, $\Tr(M_i \rho) = m_i$.
\item For all $r \in \setR$, $r(M_1,\ldots,M_{|\listM|}) = 0$.
\item For all $s \in \setS$, $s(M_1,\ldots,M_{|\listM|}) \geq 0$.
\end{enumerate}
\end{definition}

\subsection{Non-local games}\label{qmp:nonlocal}
In this paper, we are particularly interested in a special case of
the quantum moment problem, where we consider measurements on many
space-like separated systems as in the setting of non-local games.
For simplicity, we will explain the connection to non-local games for
only two such systems, Alice $\hil_A$ and Bob $\hil_B$, where it is
straightforward to extend our arguments to more than two. On each
system $\hil_A$ and $\hil_B$, we want to perform a finite set of
possible measurements $S$ and $T$ each of which has the same finite
set of outcomes $A$ and $B$ respectively. Let $m^A(a|s)$ and
$m^B(b|t)$ denote the probability that on systems $\hil_A$ and
$\hil_B$ we obtain outcomes $a \in A$ and $b \in B$ given measurement
settings $s \in S$ and $t \in T$ respectively. Furthermore, let
$m^{AB}(a|s,b|t)$ denote the joint probability of obtaining outcomes
$a$ and $b$ given settings $s$ and $t$ when performing measurements
on systems $\hil_A$ and $\hil_B$.

Informally, our question is now: Given probabilities
$m^{AB}(a|s,b|t)$, does there exist a shared state $\rho$ such that
we can find measurements on the individual systems $\hil_A$ and $\hil_B$
that lead to such probabilities? Let's first consider what polynomial
equations and inequalities we need to express our problem in the
above form. First of all, how can we express the fact that we want
our measurement operators to act on the individual systems $\hil_A$ and
$\hil_B$ alone? I.e, how can we ensure that the measurement operators
have tensor product form? We will show in Lemma~\ref{tensorProduct}
that we are guaranteed to observe such a tensor product form if and
only if for all $s \in S$, $a \in A$, $t \in T$ and $b \in B$ we have
$[A_s^a,B_t^b] = 0$, where we used $A_s^a$ and $B_t^b$ to denote the
measurement operators of Alice and Bob corresponding to measurement
settings $s$ and $t$ and outcomes $a$ and $b$ respectively. Hence, we
need to impose the polynomial equality constraints of the form
$[A_s^a,B_t^b] = 0$.

Furthermore, we want that for both systems $\mA$ and $\mB$, we obtain
a valid measurement for each measurement setting $s \in S$ and $t \in
T$. I.e., we impose further polynomial equality constraints for all
$s \in S$ and $t \in T$ of the form
$$
\sum_{a \in A} \aoq - \id = 0
\mbox{ and }
\sum_{b \in B} \boq - \id = 0,
$$
and finally the following polynomial inequality constraints
for all $a \in A$ and $b \in B$
$$
\aoq \geq 0 \mbox{ and }
\boq \geq 0.
$$
Recall, that we may restrict ourselves to considering projective measurements.
We may thus add the equality constraints
$$
(\aoq)^2 = \aoq \mbox{ and } (\boq)^2 = \boq,
$$
which automatically imply that $\aoq, \boq \geq 0$. For simplicity, we will later
use this constraint instead of the previous one.

In this paper, we are mainly concerned with the (weighted) average of
the probabilities of generating certain outcomes in a non-local game.
In other words, we wanted to know if there exist operators of the
above form such that
$$
\nu =  \Bigl\|\sum_{a,b,s,t} \pi(s,t) V(a,b|s,t) \aoq \boq\Bigr\|
$$
for some success probability $\nu$. Semidefinite programming will
allow us to turn the question of existence into an optimization
problem.

\section{Tools}\label{section:tools}

For our analysis we first need to introduce two key tools. The first
one allows us to deal with the fact that we want measurements to have
tensor product form. Our second tool is an extension of the
non-commutative Positivstellensatz of Helton and McCullough to the
field of complex numbers, from which we will derive a converging
hierarchy of semidefinite programs.

\subsection{Tensor product structure from commutation relations}\label{section:commutation}

We now first show that imposing commutativity constraints does indeed
give us the tensor product structure required for our analysis of
non-local games. It is well-known that the following statement holds
within the framework of quantum
mechanics~\cite{summers:qftIndep}.\footnote{an algebra of type-I}

In Appendix~\ref{Sec:Tensor}, we provide a simple version of this
argument accessible from a computer science perspective, which
directly applies to the task at hand.
\begin{lemma}\label{tensorProduct}
Let $\hil$ be a finite-dimensional Hilbert space, and let $\{X_{s_j}^{a_j} \in
\bop(\hil)\mid \mbox{ for all } j \in [N] \mbox{ and for all } s_j \in S_j, a_j \in A_j\}$.
Then the following two statements are equivalent:
\begin{enumerate}
\item For all $j,j' \in [N]$, $j\neq j'$, and all $s_j \in S_j$, $s_{j'} \in S_{j'}$, $a_j \in A_j$ and $a_{j'} \in {A_{j'}}$ it holds that
$[X_{s_j}^{a_j},X_{s_{j'}}^{a_{j'}}] = 0$.
\item There exist Hilbert spaces $\hil_1, \ldots,\hil_N$ such that $\hil
    = \hil_1 \otimes \ldots \otimes \hil_N$ and for all $j \in [N]$, all $s_j \in S_j$, $a_j \in A_j$ we
    have $X_{s_j}^{a_j} \in \bop(\hil_j)$.
\end{enumerate}
\end{lemma}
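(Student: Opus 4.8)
The plan is to prove the two implications separately, with $(2)\Rightarrow(1)$ being immediate and $(1)\Rightarrow(2)$ resting on the structure theory of finite-dimensional $C^*$-algebras. For $(2)\Rightarrow(1)$: if $\hil = \hil_1\otimes\cdots\otimes\hil_N$ and each $X_{s_j}^{a_j}$ acts (nontrivially) only on the factor $\hil_j$ — i.e.\ is of the form $\id\otimes\cdots\otimes(\text{operator on }\hil_j)\otimes\cdots\otimes\id$ — then operators carrying distinct player indices act on distinct tensor factors and therefore commute, so there is nothing to do.

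For $(1)\Rightarrow(2)$ I would argue first for $N=2$ and then induct on the number of players. Let $\algA\subseteq\bop(\hil)$ be the unital $*$-subalgebra generated by Alice's operators $\{X_{s_1}^{a_1}\}$ (these are automatically Hermitian, being POVM elements, so $\algA$ is $\dagger$-closed), and let $\algB$ be the unital $*$-algebra generated by Bob's operators. Hypothesis (1) says precisely that $\algB\subseteq\algA'$, the commutant of $\algA$ in $\bop(\hil)$. Now invoke the structure theorem for finite-dimensional $*$-algebras: after a unitary change of basis one has $\hil\cong\bigoplus_k \CC^{d_k}\otimes\CC^{m_k}$ with $\algA\cong\bigoplus_k \bop(\CC^{d_k})\otimes\id_{m_k}$ and $\algA'\cong\bigoplus_k \id_{d_k}\otimes\bop(\CC^{m_k})$. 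On each summand this exhibits a tensor factorisation in which Alice's operators act on the first factor and Bob's operators, lying in $\algA'$, act on the second. For the $N$-party statement one then iterates: having split off $\hil_1$ so that $\hil\cong\hil_1\otimes\cK$ with Alice's algebra $=\bop(\hil_1)\otimes\id$ and hence $\algA'=\id\otimes\bop(\cK)$, all of players $2,\dots,N$ have operators of the form $\id\otimes Y$ with $Y\in\bop(\cK)$; these still pairwise commute as operators on $\cK$, so the inductive hypothesis factors $\cK=\hil_2\otimes\cdots\otimes\hil_N$ with player $j$'s operators supported on $\hil_j$.

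The main obstacle is that the central decomposition of $\algA$ genuinely intervenes: when $\algA$ has nontrivial center (the direct sum over $k$ has more than one term) the blocks $\CC^{d_k}\otimes\CC^{m_k}$ need not be isomorphic, and $\hil$ need not be a single tensor product of two spaces. The clean way around this is to observe that the minimal central projections $Z_k$ of $\algA$ lie in $\algA\cap\algA'$, hence commute with every player's operators, so one may restrict to the range of a single $Z_k$; on that subspace $\algA$ acts as a factor and the factorisation $\hil_1\otimes\hil_2$ is literal. For the intended application in Section~\ref{findingBounds} this costs nothing: given a state one simply works in the central block on which it has the largest weight, exactly as in the averaging argument in the proof of Lemma~\ref{lemma:1}. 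The remaining work is routine bookkeeping — checking that restriction to an invariant subspace and passage to a commutant both preserve the pairwise commutation relations needed to feed the induction, and that the POVM normalisation $\sum_{a_j}X_{s_j}^{a_j}=\id$ transports correctly under these identifications.
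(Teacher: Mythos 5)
Your proposal is correct and follows essentially the same route as the paper's Appendix~\ref{Sec:Tensor}: reduce to the case where the algebra generated by one player's operators is simple (a factor), invoke the structure theory of finite-dimensional $C^*$-algebras together with the commutant computation to obtain the literal factorisation $\hil=\hil_1\otimes\cK$, and iterate over the remaining players. Your explicit caveat about the central decomposition is exactly the issue the paper addresses via its Lemma~\ref{simpleIsEnough} and the Double Commutant Theorem sketch, so nothing is missing.
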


\subsection{Positivstellensatz}\label{section:satz}

Our second tool, the Positivstellensatz (in combination with
semidefinite programming) will allow us to find certificates for the
fact a quantum moment problem is infeasible. For simplicity, we here
describe the Positivstellensatz from the perspective of non-local
games. An extension to the general quantum moment problem is possible
and will be provided in a longer version of this paper.
Our results
follow almost directly from Helton and McCullough's work and our
proof closely follows that in
Ref.~\cite{helton04:_posit_for_non_commut_polyn}.  We have chosen to
provide a complete proof of the Positivstellensatz for three reasons:
(i) the proof is more straightforward in our concrete setting, (ii)
Helton and McCullough's theorem is formulated for symmetric operators
over the field $\RR$, and we need to work with Hermitian operators
over the field $\CC$, and (iii) so we can highlight the
nonconstructive steps in the proof. We first define:

\begin{definition}[Convex Cone $\ccone$]
  \label{definition:3}
    Let $\cP$ be a collection of Hermitian  polynomials in
    (noncommutative) variables $\{X_{s_j}^{a_j}\}$. The \emph{convex cone}
    $\ccone$ generated by $\cP$ consists of polynomials of the form
  \begin{align}\label{eq:weightedSOS}
       q = \sum_{i=1}^M r_i^\dagger r_i +  \sum_{i=1}^N
       \sum_{j=1}^L s_{ij}^\dagger\,p_i\,s_{ij},
  \end{align}
where $p_i \in \cP$, $M$, $N$ and $L$ are finite, and $r_i$, $s_{ij}$
are arbitrary polynomials.
\end{definition}
\noindent In the following, we will call Eq.~\eqref{eq:weightedSOS} a
\emph{weighted sum of squares} (WSOS) representation of $q$.

The purpose of the set $\cP$ is to keep track of the constraints on
the measurement operators. Note that when considering the measurement
operators for non-local games, it is sufficient for us to restrict
ourselves to considering (measurement) operators that are positive
semidefinite. In particular, this means that all operators are
Hermitian. The Positivstellensatz as such does not require us to deal
only with Hermitian variables in the polynomials, but allows us to
use any noncommuting matrix variables. In the following, we will
always take our measurement operators to be of the form $X_{s}^{a} =
(\hat{X}_{s}^{a})^\dagger \hat{X}_{s}^{a}$. Clearly, $X_{s}^a$ is
itself a Hermitian polynomial in the variable $\hat{X}_s^a$. For
clarity of notation, we will omit this explicit expansion in the
future. Note that we will thus not impose the constraint that our
operators are Hermitian, and this
implicit expansion does not increase the size of our SDP.

We can write our constraints in terms of the following sets of
Hermitian polynomials. In the following, we will use the short hand
notation $O_{-j} := X_{s_1}^{a_1}\ldots X_{s_{j-1}}^{a_{j-1}}
X_{s_{j+1}}^{a_{j+1}} \ldots X_{s_N}^{a_N}$ where we leave indices
$s_j$ and $a_j$ implicit, to refer to a product of measurement
operators where we exclude player $j$. First, we want measurements on
different subsystems to commute. In the multi-party case, this gives
us the set of polynomials
$$
\cQ_1 = \{i[X_{s_j}^{a_j},O_{-j}] \mid \mbox{ for all } s_j \in S_j,
a_j \in A_j \mbox{ and all } O_{-j}\}.
$$
Second, we want our operators to form valid measurements.
$$
\cQ_2 = \bigcup_{j,s_j} \{\id - \sum_{a_j} X_{s_j}^{a_j}\}.
$$
Finally, by Neumark's theorem~\cite{A.Peres:Book:1995}, we may take
our measurement operators to be projectors, this gives
$$
\cQ_3 = \bigcup_{j,s_j,a_j} \{(X_{s_j}^{a_j})^2 - X_{s_j}^{a_j}\}.
$$
It's not hard to see that these constraints actually give us orthogonality
of the projectors.
For clarity, however, we may also include the following sets of polynomials
$$
\cQ_4 = \{i[X_{s_j}^{a_j}, X_{s_j}^{a_j'}] \mid
\text{for all $s_j\in S_j$ and all $a_j\neq a_j'$}\},
$$
$$
\cQ_5 = \{X_{s_j}^{a_j}X_{s_j}^{a_j'}+X_{s_j}^{a_j'}X_{s_j}^{a_j}\mid
\text{for all $s_j\in S_j$ and all $a_j\neq a_j'$}\},
$$
which explicitly demand that projectors corresponding to the same
$s_j$ are orthogonal.

Let $\cQ = \cQ_1 \cup \cQ_2 \cup \cQ_3 \cup \cQ_4 \cup \cQ_5$ and let
$\cP = \cQ \cup (-\cQ)$. Note that all polynomials in $\cP$ are
Hermitian. It is clear that if the measurement operator satisfy the
constraints, then the term
$$
\sum_{i,j} s_{ij}^\dagger\, p_j\,s_{ij}
$$
vanishes for arbitrary $p_j\in\cP$ and arbitrary polynomial $s_{ij}$.
We are now ready to state the Positivstellensatz:

\begin{theorem}[Positivstellensatz]\label{theorem:satz}
    Let $G=G(\pi, V)$ be an $N$-prover game and let $\ccone$ be the cone
    generated by the set $\cP$ defined above. Set
  \begin{align}\label{Dfn:qnu}
    q_\nu = \nu \id -\!\!\sum_{s_1,\ldots,s_N}\!\!\! \pi(s_1,\ldots,s_N)
    \sum_{a_1,\ldots,a_N}\!\!\! V(a_1,\ldots,a_N|s_1,\ldots,s_N)
    X_{s_1}^{a_1}\ldots X_{s_N}^{a_N}.
  \end{align}
If $q_\nu > 0$, then $q_\nu \in \ccone$, i.e.,
\begin{align}\label{Eq:qnu:SOS}
    \nu \id -\!\!\sum_{s_1,\ldots,s_N}\!\!\! \pi(s_1,\ldots,s_N)\!\!
    \sum_{a_1,\ldots,a_N}\!\!\!V(a_1,\ldots,a_N|s_1,\ldots,s_N)
    X_{s_1}^{a_1}\ldots X_{s_N}^{a_N}
    =\sum_i r_i^\dagger r_i +  \sum_{i,j} s_{ij}^\dagger\,p_i\,s_{ij},
\end{align}
for some $p_i\in\cP$, and some polynomials $r_i$, $s_{ij}$.
\end{theorem}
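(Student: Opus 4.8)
The plan is to deduce the noncommutative Positivstellensatz for $q_\nu$ from Helton and McCullough's theorem (or, equivalently, to reprove it directly in our concrete setting), with two adaptations: passing from the real symmetric case to the complex Hermitian case, and expressing our measurement-operator constraints in the form demanded by that theorem. The first step is to set up the ambient algebraic framework: work in the $*$-algebra of polynomials in the noncommuting variables $\{\hat X_{s_j}^{a_j}\}$ (recalling that $X_{s_j}^{a_j} = (\hat X_{s_j}^{a_j})^\dagger \hat X_{s_j}^{a_j}$, so positivity of the measurement operators is built in), and observe that $q_\nu$, as defined in \eqref{Dfn:qnu}, is a Hermitian polynomial. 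The hypothesis $q_\nu > 0$ means that $q_\nu$ evaluates to a positive semidefinite operator on \emph{every} tuple of Hilbert-space operators satisfying the constraints encoded by $\cP = \cQ \cup (-\cQ)$ — crucially including infinite-dimensional ones, which is exactly the regime where the commutation constraints $\cQ_1$ need not give a genuine tensor-product factorization, and hence why we land on $\ftv(G)$ rather than $\omega^*(G)$.

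Next I would verify that the constraint set is of the type the Positivstellensatz can digest: all polynomials in $\cP$ are Hermitian, the set $\cP$ is closed under negation so that the equality constraints $p_i = 0$ are correctly captured (any $p_i \in \cP$ contributes a term $s_{ij}^\dagger p_i s_{ij}$ that vanishes on the feasible set, and since $-p_i$ is also available, both inequality directions are present), and the projector relations in $\cQ_3$ force each $X_{s_j}^{a_j}$ — and thus all polynomials built from them — to be bounded, uniformly over the feasible set. This uniform bound, together with the normalization $\sum_{a_j} X_{s_j}^{a_j} = \id$ from $\cQ_2$, is what plays the role of the archimedean/boundedness hypothesis needed to invoke the Helton–McCullough machinery; I would check that a scalar multiple of $\id$ minus a sum of squares of the generators lies in $\ccone$, certifying the needed boundedness. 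Then the conclusion is immediate: $q_\nu > 0$ on the feasible set implies $q_\nu \in \ccone$, i.e. $q_\nu$ admits a WSOS representation \eqref{eq:weightedSOS}, which is precisely \eqref{Eq:qnu:SOS}.

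For the complex-to-real passage — item (ii) in the paper's list of reasons — the cleanest route is to carry out the Helton–McCullough argument directly over $\CC$ rather than reduce to $\RR$: the proof proceeds via a GNS-type / separation argument, writing the cone $\ccone$ as an intersection of halfspaces and showing that if $q_\nu \notin \ccone$ then there is a state (a positive linear functional) witnessing $q_\nu \not\geq 0$ on some representation, contradicting $q_\nu > 0$; this argument is insensitive to the base field as long as one consistently uses $\dagger$ in place of transpose and Hermitian in place of symmetric. The main obstacle, and the step I would spend the most care on, is precisely establishing the uniform boundedness/archimedean property for $\ccone$ from the explicit constraint sets $\cQ_1,\ldots,\cQ_5$ — i.e. showing that the projector and measurement-completeness relations genuinely force the feasible operator tuples into a norm-bounded set and that this boundedness is itself certified inside the cone, since the Positivstellensatz fails without such a condition. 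The remaining nonconstructive ingredient (the existence of the separating state / the Hahn–Banach step) is inherited from Helton–McCullough and I would simply flag it rather than attempt to make it effective, consistent with reason (iii) in the paper.
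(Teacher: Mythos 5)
Your proposal matches the paper's proof in structure and content: the paper likewise reproves the Helton--McCullough Positivstellensatz directly over $\CC$ (rather than reducing to $\RR$), establishes the archimedean/boundedness ingredient from the constraint sets exactly as you describe (Lemma~\ref{lemma:5} shows $\id - W^\dagger W \in \ccone$ for any monomial $W$, using $\cQ_2$ and $\cQ_3$), obtains a separating linear functional via Hahn--Banach (Lemma~\ref{lemma:7}, after Lemma~\ref{lemma:4} supplies the needed seminorm), and finishes with a GNS construction to exhibit a representation on which $q_\nu \not> 0$, giving the contrapositive. The only small imprecision is your gloss of the hypothesis $q_\nu > 0$ as ``positive semidefinite on every feasible tuple'': the theorem needs strict positivity (which is what the GNS step contradicts), and indeed the application in Theorem~\ref{theorem:4} takes $\nu = \ftv(G) + \eps$ precisely to ensure this.
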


\section{Finding upper bounds}\label{findingBounds}

We now show how we can approximate the optimal field-theoretic value
of a non-local game using semidefinite programming. We thereby
construct a converging hierarchy of SDPs, where each level in this
hierarchy gives us a better upper bound on the actual value of the
game. To this end we will use the Positivstellensatz of
Theorem~\ref{theorem:satz} in combination with the beautiful approach
of Parrilo~\cite{parrilo:thesis,parrilo:paper}. For simplicity, we
first describe everything for the two party setting; a generalization
is straightforward.

Recall from Definition~\ref{Dfn:ftv} that if for some real number
$\nu$ we have
\begin{equation}\label{Eq:qnu>0}
    q_\nu = \nu \id - \sum_{a,b,s,t} \pi(s,t) V(a,b|s,t)
    \aoq \boq \geq 0,
\end{equation}
and the operators $\{\aoq\}$ and $\{\boq\}$ form a valid measurement,
then $\nu \geq \omega^f(G)$ gives us an upper bound for the optimum
value of the game. When trying to find the optimal value of the game,
our task is thus to find the smallest $\nu$ for which $q_\nu \geq 0$
for any choice of measurement operators. Clearly, if we could express $q_\nu$ as an SOS for any choice of measurement operators $\{\aoq\}$ and $\{\boq\}$ then $q_\nu \geq 0$ and we would also have $\nu \geq \omega^f(G)$. Luckily, the Positivstellensatz of Theorem~\ref{theorem:satz} gives us almost the converse: \emph{if} $q_\nu > 0$, then $q_\nu$ can be written as a {\em weighted} sum of squares (WSOS). Recall from the previous section, that the purpose of the additional term in the weighted sums of squares representation is to deal with the constraint that we would like the operators $\{\aoq\}$ and $\{\boq\}$ to form a valid quantum measurement. Note that $q_\nu$ reduces to an SOS if we could express $q_\nu$ as a WSOS, i.e.,
\begin{equation}
    q_\nu = \nu \id - \sum_{a,b,s,t} \pi(s,t) V(a,b|s,t)=\sum_{i=1}^M r_i^\dagger r_i +
    \sum_{i=1}^N\sum_{j=1}^L s_{ij}^\dagger\, p_i\, s_{ij},
\end{equation}
for some polynomials $r_i$ and $s_{ij}$ in the variables $\{\aoq\}$
and $\{\boq\}$ in such a way that whenever the variables satisfy the constraints the second term in the above expansion vanishes.

It is not difficult to see that if $q_\nu > 0$, then
there exists \emph{no} strategy that achieves a winning
probability of $\nu$ or higher.
Applied to
our problem, the Positivstellensatz thus tells us that if there
exists \emph{no} strategy that achieves winning probability $\nu$,
then $q_\nu$ \emph{can} be written as a weighted sum of squares.
Intuitively, the WSOS representation of $q_\nu$ thus bears witness to
the fact that the set of measurement operators and states giving a
success probability higher than $\nu$ is empty. The advantage of this
procedure is that semidefinite programming can be used to test
whether polynomials (such as $q_\nu$) admit a representation as WSOS.
In Section~\ref{Sec:Examples}, we will look at some specific examples
of this approach (see also~\cite{parrilo:paper,sostools:manual} for
the analogous treatment for commutative variables).

When trying to find the optimal value of the game, our task is thus
to find the smallest $\nu$ for which $q_\nu$ admits a WSOS
representation. Hence, we want to
\begin{sdp}{minimize}{$\nu$}
&$q_\nu \in \ccone.$
\end{sdp}
Recall that if $q_\nu \in \ccone$, then $q_\nu$ is of the
form
\begin{equation}\label{Eq:qSOS}
    q_\nu =  \sum_{i=1}^M r_i^\dagger r_i +
    \sum_{i=1}^N\sum_{j=1}^L s_{ij}^\dagger\, p_i\, s_{ij},
\end{equation}
for some polynomials $r_i$ and $s_{ij}$ in the variables $\{\aoq\}$
and $\{\boq\}$. A point that is worth noting now is that in the above
optimization,
Eq.~\eqref{Eq:qSOS} is an identity true for all $\{\aoq\}$,
$\{\boq\}$, rather than an equation that is only true when
$\{\aoq\}$, $\{\boq\}$ correspond to projective measurements. In
this, the additional term is rather similar to the Lagrange
multipliers in more conventional constrained optimizations.

\subsection{SDP hierarchy}
The main difficulty now is that we do not know how large the WSOS
representation of $q_\nu$ has to be. That is, we do not know ahead of
time how large we need to choose the degree of the polynomials in the
representation. The techniques discussed above are therefore not
constructive and do not lead to a direct computation of
$\omega^f(G)$. However it is straightforward to find semidefinite
relaxations that provide upper bounds on $\omega^f(G)$. In this we
simply apply the methods of Parrilo~\cite{parrilo:thesis,
parrilo:paper} for the case of polynomials of commutative variables.
The main requirement is to fix an integer $n$ and look for a sum of
squares decomposition for $q_\nu$ that has a total degree of at most
$2n$. Letting $\nu = \omega^f(G) + \eps$, this means that $\eps$ may
not be made arbitrarily small but will always result in an upper
bound for $\omega^f(G)$. This upper bound can be computed as an SDP
using methods analogous to~\cite{parrilo:paper}. Consider the problem given above for $q_\nu$ as in Eq.~\eqref{Eq:qSOS}. Notice that all of the constraint polynomials $p_i$ defined in Section \ref{section:satz} have total degree less than or equal to 2 so we require that each $r_i$ is of total degree $n$ and each $s_i$ is of total degree at most $n-1$. The lowest level of the hierarchy has $n=1$ and corresponds to applying the method of Lagrange multipliers to finding the quantum value of the game. In the following, we use the term \emph{level $n$} to refer to a level of the hierarchy where the total degree of $q_\nu$ is $2n$ when concerned with a bipartite game. For a game $G$, denote the solution to the SDP at level $n$ as $\omega^{\text{sdp}}_{n}(G)$. It should be clear that if $q_\nu$ has a WSOS decomposition of degree $2n$, it must also have a WSOS decomposition with higher degree. As such, the optimum derived from the hierarchy of SDPs must obey the
following inequalities:
\begin{equation}\label{Eq:Omega:Convergence}
    \omega^\text{sdp}_1(G)\geq \omega^\text{sdp}_2(G) \geq \cdots \geq
    \omega^{\text{sdp}}_n(G).
\end{equation}

\begin{theorem}
  \label{theorem:4}
  The solutions to the SDP hierarchy converge to $\omega^{f}(G)$, i.e.,
  $\lim_{n\to\infty}\omega^{\text{sdp}}_n = \omega^f(G)$.
\end{theorem}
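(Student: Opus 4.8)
The plan is to treat Theorem~\ref{theorem:4} as essentially a corollary of the Positivstellensatz (Theorem~\ref{theorem:satz}): once that is in hand, what remains is a short monotonicity-and-compactness argument together with one point that requires care. First I would record the elementary properties of the sequence $\omega^{\text{sdp}}_n(G)$. It is non-increasing by Eq.~\eqref{Eq:Omega:Convergence}, since any WSOS decomposition of $q_\nu$ of degree $2n$ is trivially also one of degree $2(n+1)$. It is bounded below by $\omega^f(G)$: if $\nu$ is feasible at level $n$, then $q_\nu = \sum_i r_i^\dagger r_i + \sum_{i,j} s_{ij}^\dagger p_i s_{ij}$ holds as a polynomial identity, so substituting any operators $\{\aoq\},\{\boq\}$ satisfying the constraints in $\cQ$ --- in particular projective, commuting measurements, which is no loss of generality by Neumark's theorem --- annihilates every $p_i$ and leaves $q_\nu = \sum_i r_i^\dagger r_i \geq 0$, i.e. $\sum_{a,b,s,t}\pi(s,t)V(a,b|s,t)\,\aoq\boq \leq \nu\,\id$. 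Since each $\aoq\boq$ is a product of commuting positive operators and hence positive, the left-hand side is a positive operator whose norm is therefore at most $\nu$; taking the supremum over strategies gives $\omega^f(G) \leq \nu$, hence $\omega^f(G) \leq \omega^{\text{sdp}}_n(G)$. Being non-increasing and bounded below, the limit $L := \lim_n \omega^{\text{sdp}}_n(G) = \inf_n \omega^{\text{sdp}}_n(G)$ exists and satisfies $L \geq \omega^f(G)$.

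The only nontrivial step is the reverse inequality $L \leq \omega^f(G)$, which I would establish by contradiction. Suppose $L > \omega^f(G)$ and fix $\nu$ with $\omega^f(G) < \nu < L$. The key observation is that $q_\nu$ is then \emph{strictly} positive: for every Hilbert space and every collection of operators satisfying the constraints, $\sum_{a,b,s,t}\pi(s,t)V(a,b|s,t)\,\aoq\boq$ is a positive Hermitian operator of operator norm at most $\omega^f(G)$ by the very definition of the field-theoretic value, so
\[
q_\nu = \nu\,\id - \sum_{a,b,s,t}\pi(s,t)V(a,b|s,t)\,\aoq\boq \;\geq\; (\nu - \omega^f(G))\,\id \;>\; 0
\]
uniformly. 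This is exactly the hypothesis of Theorem~\ref{theorem:satz}, which therefore puts $q_\nu$ in $\ccone$: there is a WSOS representation of $q_\nu$ consisting of finitely many terms $r_i, s_{ij}$, each of some finite degree. Choosing $n_0$ at least as large as $\max_i \deg r_i$ and $1 + \max_{i,j} \deg s_{ij}$, this representation is feasible for the level-$n_0$ SDP, so $\omega^{\text{sdp}}_{n_0}(G) \leq \nu < L = \inf_n \omega^{\text{sdp}}_n(G)$, a contradiction. Hence $L = \omega^f(G)$, as claimed. The $N$-prover case is word-for-word identical after replacing $\aoq\boq$ by $X_{s_1}^{a_1}\cdots X_{s_N}^{a_N}$ and using once more that a product of commuting positive operators is positive.

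The real content of the theorem therefore lies entirely in Theorem~\ref{theorem:satz}; the only point requiring genuine care in the present argument is that the Positivstellensatz is invoked with a \emph{strictly} positive $q_\nu$. This is why the contradiction argument must separate $\nu$ from $\omega^f(G)$ by a true gap rather than merely take $\nu = \omega^f(G)$ (for which the Positivstellensatz says nothing), and it is also why one should not expect the value $\omega^f(G)$ to be attained at any finite level of the hierarchy. The same remark explains why the proof is non-constructive: Theorem~\ref{theorem:satz} gives no control on the degree $n_0$ at which a WSOS representation of $q_\nu$ first appears, so the argument yields convergence of the hierarchy but no bound on its rate --- precisely the gap flagged among the open questions.
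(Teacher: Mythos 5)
Your proof is correct and takes essentially the same route as the paper: both arguments invoke the Positivstellensatz for $\nu > \omega^f(G)$, read off a finite-degree WSOS, and conclude that some finite level of the hierarchy already achieves a value within $\eps$ of $\omega^f(G)$. The only cosmetic difference is that the paper argues directly (fix $\eps>0$, set $\nu=\omega^f(G)+\eps$, find the level $D$ at which the WSOS lives) while you frame the reverse inequality as a contradiction; your explicit verification that $q_\nu \geq (\nu - \omega^f(G))\,\id > 0$ is a useful clarification of a point the paper leaves implicit.
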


\begin{proof}
That $\omega^\text{sdp}_n (G) \geq \omega^f(G)$ follows from our
discussion above. To prove convergence, we use the Positivstellensatz
given by Theorem~\ref{theorem:satz}. Fix $\eps>0$ and let $\nu =
\omega^f(G) + \eps$ with $q_\nu$ defined as in Eq.~\eqref{Dfn:qnu}.
By Theorem~\ref{theorem:satz}, $q_\nu$ has a representation as a
WSOS,
\begin{equation*}
    q_\nu =  \sum_{i=1}^M r_i^\dagger r_i +
    \sum_{i=1}^N\sum_{j=1}^L s_{ij}^\dagger\, p_i\, s_{ij},
\end{equation*}
Let $2D$ be the maximum degree of any of the polynomials $r_i^\dagger
r_i$ and $s_{ij}^\dagger\, p_i\, s_{ij}$ that occurs in the above
expression. Then, if we consider a level $D$ SDP relaxation, we must
necessarily arrive at an optimum such that $\omega_{D}^\text{sdp}(G)
\le \omega^f(G)+\eps$. Likewise, by choosing $\eps$ arbitrarily close
to zero, there is a corresponding SDP with total degree $2n$ whose
optimum $\omega_{n}^\text{sdp}(G)$ is arbitrarily close to
$\omega^f(G)$. In particular, from Eq.~\eqref{Eq:Omega:Convergence},
we can see that as $n\to\infty$, the optimum of the SDP hierarchy
must converge to $\omega^f(G)$.
\end{proof}

In Section~\ref{Sec:I3322}, we provide a simple example of how the
degree of the polynomials can be increased when going from level 1 to
level 2. There are many connections of this semidefinite programming
hierarchy to other methods that can be used to bound the quantum
values of games. In particular, it can be shown that the dual
semidefinite programs to this hierarchy are equivalent to the moment
matrix methods of NPA~\cite{navascues:_bound}, thus showing that the
hierarchy of semidefinite programs discussed in that work converges
to the entangled value of the game.
Our example of the CHSH inequality below demonstrates this connection
explicitly. In relation to this, it is worth noting that the duality
between the two approaches (sum of squares and moment matrix) arises
also in the case of commutative variables where the moment matrix
methods of Laserre~\cite{lasserre01} are dual to the semidefinite
programs discussed by Parrilo~\cite{parrilo:paper}.

\subsection{Examples}\label{Sec:Examples}

\subsubsection{CHSH inequality}\label{example:CHSH}

We will now look at the simplest non-local game that is derived from
the CHSH inequality~\cite{chsh:nonlocal}. In particular, we will
illustrate how the tools that have we developed allow us to
prove that~\cite{tsirel:original}
$$
\mS_{\CHSH} = \ext{A_1 B_1} + \ext{A_1 B_2} + \ext{A_2 B_1} - \ext{A_2 B_2} \leq 2 \sqrt{2},
$$
where $A_1,A_2$ and $B_1,B_2$ are observables with eigenvalues $\pm
1$ corresponding to Alice and Bob's measurement settings
respectively. First of all, note that since we are only interested in
the expectation values of the form $\ext{A_1 B_2}$ we may simplify
our problem: instead of dealing with the probabilities of individual
measurement outcomes, we are only interested in whether said expectation values can be obtained. Here, our constraints become much simpler and we only demand that $A_j^2 = \id$, $B_j^2 = \id$ and $[A_j,B_k] = 0$ for all $j,k \in \{1,2\}$. The Bell operator for the CHSH inequality is given by~\cite{S.L.Braunstein:PRL:1992}
$$
\cB_{\CHSH} = A_1 B_1 + A_1 B_2 + A_2 B_1 - A_2 B_2.
$$
Hence, to find the optimum value our goal is to
\begin{sdp}{minimize}{$\nu$}
&$q_\nu = \nu\id - \cB_{\CHSH} \in \ccone$.
\end{sdp}
The constraint in the above optimization thereby amounts to
determining whether $q_\nu$ as written above can be cast in the form
of a WSOS, which reduces to an SOS for measurement operators
satisfying the constraints. The numerical package
SOSTOOLS~\cite{sostools} gives a frontend to other SDP solvers and
explains how to apply these techniques in the case of commutative
variables. Similar ideas can be applied here. However, for our simple
example, it is not hard to see how this problem can be recast in a
language that may be more familiar.

Since $\cB_{\CHSH}$ is a noncommutative
polynomial of degree 2, the lowest level relaxation consists of
looking for a WSOS decomposition for $q_\nu$ that is of degree 2. To
this end, we shall consider a vector of monomial of degree 1, namely,
$z = (A_1,A_2,B_1,B_2)^\dagger$. Our goal is to find a
$4\times 4$ matrix $\Gamma$ such that $q_\nu = z^\dagger \Gamma z$
whenever the constraints are satisfied. I.e. we have
$[A_j,B_k] = 0$ for all $j,k \in \{1,2\}$, and
polynomials
\begin{equation}\label{Eq:p_j:CHSH}
    p^{(A)}_j:=\id- (A_j)^2,\quad
    p^{(B)}_j:=\id- (B_j)^2,\quad j=1,2,
\end{equation}
and their negations vanish.
Evidently, since we want $q_\nu$ to be a
Hermitian polynomial, and we want our commutation constraints to hold, we may without loss of generality take $\Gamma$
to be real and symmetric.
Note that this already takes care of the commutation constraints.
Moreover, since all remaining constraints are
quadratic, when looking for a WSOS decomposition for $q_\nu$, it
suffices to consider $s_{ij}$ in Eq.~\eqref{Eq:qSOS} as multiples of
$\id$. Let $\gamma_{ij} = [\Gamma]_{i,j}$, then
a small calculation shows that this amounts to requiring
\begin{eqnarray}
\nu &=& \gamma_{11} + \gamma_{22} + \gamma_{33} + \gamma_{44}\nonumber\\
0 &=& \gamma_{12} = \gamma_{21} = \gamma_{34} = \gamma_{43}\nonumber\\
-1 &=& 2\gamma_{13} = 2\gamma_{14} =2\gamma_{23}\nonumber\\
1 &=&  2\gamma_{24},
\label{Eq:constraints:gamma}
\end{eqnarray}
so that
\begin{equation}\label{Eq:qnu:CHSH:explicit}
    q_\nu=\nu\,\id-\cB_{\CHSH}=z^\dagger\Gamma z+\sum_{j=1}^2
    \gamma_{jj}\, p_j^{(A)} +\sum_{j=3}^4 \gamma_{jj}\,
    p_{j-2}^{(B)}.
\end{equation}
Using the constraints given in Eq.~\eqref{Eq:constraints:gamma}, we
see that $\Gamma$ should be of the form
\begin{equation}\label{Eq:Gamma:CHSH}
    \Gamma = \frac{1}{2}\left(\begin{array}{cccc}
    2\gamma_{11} &0 &-1 &-1\\
    0 & 2\gamma_{22} & -1 & 1\\
    -1 & -1 & 2\gamma_{33} & 0\\
    -1 & 1 & 0 & 2\gamma_{44}
    \end{array}\right).
\end{equation}
Effectively, $\Gamma$ is the matrix obtained by expressing
$\nu\,\id-\cB_{\CHSH}-\sum_{j=1}^2\gamma_{jj}\, p_j^{(A)}
-\sum_{j=3}^4 \gamma_{jj}\, p_{j-2}^{(B)}$ in the form of
$z^\dagger\Gamma z$. Now, if we can find a $\Gamma \geq 0$ that is of
this form, then whenever the polynomials given in
Eq.~\eqref{Eq:p_j:CHSH} vanish, $q_\nu = z^\dagger \Gamma z$ is an
SOS. To see this, note that in this case, we may write $\Gamma =
U^\dagger D U$, where $U$ is unitary and $D = \text{diag}(d_i)$ only
consists of nonnegative diagonal entries. Then we can write $q_\nu$
as $\sum_i d_i (Uz)_i^\dagger (Uz)_i$ which is clearly an SOS.
Conversely, note that if $q_\nu$ is an SOS, we can find such a matrix
$\Gamma$. Hence, we can rephrase our optimization problem as the SDP
\begin{sdp}{minimize}{$\Tr(\Gamma)$}
&$\Gamma \geq 0$.
\end{sdp}
This is, in fact, exactly the dual of the SDP corresponding to the first level of the SDP hierarchy given by NPA~\cite{navascues:_bound}, and the dual of the SDP for the special case of XOR games~\cite{wehner05d}. Solving this SDP, one
obtains
\begin{equation}\label{CHSHgamma}
      \Gamma = \frac{1}{2}\begin{pmatrix}
     \sqrt{2} & 0 & -1 & -1 \\
     0 & \sqrt{2} & -1 & 1 \\
     -1 & -1 & \sqrt{2} & 0 \\
     -1 & 1 & 0 & \sqrt{2}
    \end{pmatrix},
\end{equation}
which gives $2\sqrt{2}$ as an optimum. From here and
Eq.~\eqref{Eq:qnu:CHSH:explicit}, it is possible to write down a WSOS
decomposition for $\nu=2\sqrt{2}$ as
\begin{align}\label{eqNext:2}
    q_{2\sqrt{2}} = 2\sqrt{2}\,\id - \cB_{\CHSH} =
    \frac1{2\sqrt{2}} (h_1^\dagger h_1 + h_2^\dagger h_2)
    +\frac{1}{\sqrt{2}}\sum_{j=1}^2 p_j^{(A)}
    +\frac{1}{\sqrt{2}}\sum_{j=3}^4  p_{j-2}^{(B)},
\end{align}
with $h_1 = A_1 + A_2 - \sqrt{2}\,B_1$ and $h_2 = A_1 - A_2
-{\sqrt{2}}\,B_2$. This immediately implies that whenever the
constraints are satisfied, $q_{2\sqrt{2}} \geq 0$ and hence
$\cB_{\CHSH} \leq 2\sqrt{2}\,\id$. It is well known that for
the CHSH inequality, this bound can be
achieved~\cite{tsirel:original}.

\subsubsection{The $I_{3322}$ inequality}\label{Sec:I3322}

We now consider another example of a two-player game, where the first level of the
hierarchy does not give a tight bound.
The $I_{3322}$ inequality~\cite{D.Collins:JPA:2004} is a Bell
inequality phrased in terms of probabilities (not expectation values)
whereby Alice and Bob can each perform
one of three possible two-outcome measurements. Without loss of generality, the Bell
operator in this case can be written as:
\begin{align*}
    \cB_{3322}=A^a_1(B^b_1+B^b_2+B^b_3)+A^a_2(B^b_1+B^b_2-B^b_3)
    +A^a_3(B^b_1-B^b_2)-A^a_1-2B^b_1-B^b_2,
\end{align*}
where $A^a_i$ and $B^b_j$ ($i,j=1,2,3$) are projectors corresponding
to, respectively, outcome $a$ of Alice's $i$-th measurement and
outcome $b$ of Bob's $j$-th measurement for some fixed $a$ and $b$.
To the best our knowledge, the maximum entangled value for
$\cB_{3322}$, i.e., $\omega^*(I_{3322})$, is not known. The best
known lower bound on $\omega^*(I_{3322})$ is
0.25~\cite{D.Collins:JPA:2004}; some upper bounds
(0.375~\cite{Y.C.Liang:PRA:2007}, 0.3660~\cite{D.Avis:JPA:2006}) are
also known in the literature. Here, we will make use of the tools
that we have developed  to obtain a hierarchy of upper bounds on this
maximum. In analogous with the CHSH scenario, this corresponds to
solving the following SDP for some fixed degree of $q_\nu$:
\begin{align}
    &\text{minimize\ \ \ } \nu,\nonumber\\
    &\text{subject to\ \ } q_\nu=\nu\,\id-\cB_{3322}\in\ccone.
    \label{Eq:SDP:abstract}
\end{align}

In particular, since $\cB_{3322}$ is a noncommutative polynomial of
degree 2, the lowest level SDP relaxation would correspond to
choosing a vector of monomials with degree at most one, i.e.,
\begin{equation}\label{Eq:z:1st:3322}
    z^\dagger=(\id, A^a_1, A^a_2, A^a_3, B^b_1, B^b_2, B^b_3).
\end{equation}
We can now proceed analogously to the CHSH case, where we will look
for a particular matrix $\Gamma$ restricted by our constraints,
namely, $(A^a_j)^2 = A^a_j$ and $(B^b_j)^2 = B^b_j$ for all $j \in
\{1,2,3\}$, where again for the purpose of implementation, we will
implicitly enforce the commutativity conditions $[A_i,B_j]=0$ for all
$i,j\in\{1,2,3\}$. Solving the corresponding SDP (Appendix~\ref{Sec:SDP:Lowest}), one obtains
$\omega_1^\text{sdp}(I_{3322})=3/8$, and the matrix
\begin{gather*}
    \Gamma=\frac{1}{2}\left(\begin{array}{rrrrrrr}
    \frac{3}{4} & 0 & -1 & -\frac{1}{2} & 1 &
    0 & -\frac{1}{2}\\
    0 & 2 & 0 & 0 & -1 & -1 & -1\\
    -1& 0 & 2 & 0 & -1& -1& 1\\
    -\frac{1}{2} & 0 & 0 & 1& -1& 1& 0\\
    1& -1& -1& -1& 2 & 0 & 0\\
    0 & -1& -1& 1& 0 & 2 & 0\\
    -\frac{1}{2} & -1& 1& 0 & 0 & 0 & 1\\
    \end{array}\right),
\end{gather*}
which provides a WSOS decomposition for $\nu=3/8$, i.e.,
\begin{equation}\label{Eq:3322:SOS}
    q_{3/8} = \frac{3}{8}\,\id-\mB_{3322}=z^\dagger\Gamma z +\sum_i s_i^\dagger\,p_i s_i,
\end{equation}
where
\begin{gather}\label{Eq:p_i:3322}
    p_i=\left\{ \begin{array}{c@{\quad:\quad}l}
        A^a_i-\left(A^a_i\right)^2 & i=1,2,3,\\
        B^b_{i-3}-\left(B^b_{i-3}\right)^2 & i=4,5,6.\\
        \end{array} \right. ,\quad
    s_i=\left\{ \begin{array}{c@{\quad:\quad}l}
        1  & i=1,2,4,5,\\
        \frac{1}{\sqrt{2}} & i=3,6.\\
        \end{array} \right. .
\end{gather}

Given that $\omega_1^\text{sdp}(I_{3322})$ is far from the best known lower
bound on $\omega^*(I_{3322})$, it seems natural to also look at
higher level relaxations for $I_{3322}$. For the next level, we will
look for $q_\nu$ that is of degree at most 4. Note that we have many
options to extend the hierarchy. The easiest way to proceed is to
extend the vector $z$ by some monomials of degree two in the
measurement operators. For this we do not even have to consider using
all possible degree 2 monomials, but could consider only a subset of
them such as that given by the following vector
\begin{equation}\label{Eq:z:1st+AB}
    z^\dagger=(\id, A^a_1, A^a_2, A^a_3, B^b_1, B^b_2, B^b_3, A^a_1
    B^b_1, A^a_1B^b_2,A^a_1B^b_3,A^a_2B^b_1, A^a_2B^b_2,A^a_2B^b_3,
    A^a_3B^b_1, A^a_3B^b_2, A^a_3B^b_3).
\end{equation}
In particular, solving the corresponding SDP
(Appendix~\ref{Sec:SDP:Higher}) with $z$ given by
Eq.~\eqref{Eq:z:1st+AB} gives an optimum that is approximately
$0.251~470~90$ which is significantly less than $3/8=0.375$.
Clearly, we could increase the size of $z$ further by including all
relevant monomials of degree 2 or less.
\begin{equation*}
    z^\dagger=(\id,A^a_1,A^a_2,\ldots,B^b_2,B^b_3,
    A^a_1A^a_2,A^a_1A^a_3, \ldots, A^a_3A^a_2, B^b_1B^b_2, B^b_1B^b_3,\ldots, B^b_3B^b_2,
    A^a_1B^b_1,A^a_1B^b_2,\ldots,A^a_3B^b_3).
\end{equation*}
Proceeding as before, we end up with the optimum of the second order
relaxation $\omega_2^\text{sdp}(I_{3322})\approx 0.250~939~72$. In the next
level, we would then include all monomials of degree 3 and less, and
this gives $\omega_3^\text{sdp}(I_{3322})\approx 0.250~875~56$.

\subsubsection{Yao's inequality}\label{yao}

Finally, we examine a well-known tripartite Bell
inequality~\cite{yao:inequality} among three provers: Alice, Bob and
Charlie. Each prover performs three possible measurements, each of
which has two possible outcomes. Similarly to the CHSH inequality
above, we may thus express each measurement as an observable with
eigenvalues $\pm 1$. For simplicity, let $A_1,A_2,A_3$, $B_1,B_2,B_3$ and
$C_1,C_2,C_3$ correspond to the observables of Alice, Bob and Charlie
respectively. Yao's inequality states that for any shared state
$\rho$ we have
\begin{equation}\label{Eq:YaoIneq}
    \mS_{\Yao} = \ext{A_1 B_2 C_3} + \ext{A_2 B_3 C_1} +
    \ext{A_3 B_1 C_2} - \ext{A_1 B_3 C_2} - \ext{A_2 B_1 C_3} - \ext{A_3
    B_2 C_1} \leq 3 \sqrt{3}.
\end{equation}

We now provide a very simple proof of this inequality based on our
framework. First of all, note that since we are only interested in
the expectation values of the form $\ext{A_1 B_2 C_3}$ we may again
restrict ourselves to dealing only with expectation values in analogy
with the CHSH example presented above. Our constraints are also
analogous to the CHSH case. Among them, we have the following
constraint polynomials
\begin{equation}\label{Eq:Constraint:Yao}
    p^{(A)}_j:=\id- (A_j)^2,\quad
    p^{(B)}_j:=\id- (B_j)^2,\quad
    p^{(C)}_j:=\id- (C_j)^2,
\end{equation}
for $j=1,2,3$. Next, note that the Bell operator for Yao's inequality
can be written as [c.f. Eq.~\eqref{Eq:YaoIneq}]
\begin{equation}\label{Eq:BellOperator:Yao}
    \mB_{\Yao} = A_1 B_2 C_3 + A_2 B_3 C_1 + A_3 B_1 C_2 - A_1 B_3 C_2 -
    A_2 B_1 C_3 - A_3 B_2 C_1,
\end{equation}
which is a noncommutative polynomial of degree 3.

For our task at hand, we will consider the following SDP relaxation
\begin{align*}
    &\text{minimize\ \ \ } \nu,\\
    &\text{subject to\ \ } q_\nu=\nu\,\id-\cB_{\Yao}\in\ccone,
\end{align*}
with $q_\nu$ being a polynomial of degree at most 6. As usual, we
will implicitly enforce the commutativity constraints, i.e.,
$[A_i,B_j]=0$, $[A_i,C_k]=0$, and $[B_j,C_k]=0$ for all
$i,j,k\in\{1,2,3\}$. With this assumption, it turns out that it
suffices to consider the following 25-element vector
\begin{equation}\label{Eq:z:NewBasis}
    z=
    \left(\begin{array}{c}
     \id     \\
     A_1B_2C_3 \\
     A_2B_3C_1 \\
     A_3B_1C_2 \\
     A_1B_3C_2 \\
     A_2B_1C_3 \\
     A_3B_2C_1 \\
     \end{array}\right)\oplus
     \left(\begin{array}{c}
     A_1B_1C_2 \\
     A_1B_2C_1 \\
     A_2B_1C_1 \\
     A_1B_2C_2 \\
     A_2B_1C_2 \\
     A_2B_2C_1 \\
     \end{array}\right)\oplus
     \left(\begin{array}{c}
     A_1B_1C_3 \\
     A_1B_3C_1 \\
     A_3B_1C_1 \\
     A_1B_3C_3 \\
     A_3B_1C_3 \\
     A_3B_3C_1 \\
     \end{array}\right)\oplus
     \left(\begin{array}{c}
     A_2B_2C_3 \\
     A_2B_3C_2 \\
     A_3B_2C_2 \\
     A_2B_3C_3 \\
     A_3B_2C_3 \\
     A_3B_3C_2 \\
     \end{array}\right).
\end{equation}
In this case, since the constraint polynomials given in
Eq.~\eqref{Eq:Constraint:Yao} are quadratic, when looking for a WSOS
decomposition for $q_\nu$, we will need to consider $s_{ij}$ in
Eq.~\eqref{Eq:qSOS} as an arbitrary polynomial of $A_i$, $B_j$ and
$C_k$ with degree at most 2. Proceeding in a way analogous to the 2nd
level relaxation for $I_{3322}$ inequality, one obtains the
$25\times25$ positive semidefinite matrix
$    \Gamma=\Gamma_{7\times7}\bigoplus_{i=1}^6\Gamma_{3\times3}$,
where
\begin{gather}\label{Eq:Gamma:Blks}
    \Gamma_{7\times7}:=\hf\left(\begin{array}{ccccccc}
    3\sqrt{3} & -1 & -1 & -1 & 1 &
    1 & 1\\
    -1 & \frac{1}{\sqrt{3}} & 0 & 0 & -\frac{1}{3\sqrt{3}} & -\frac{1}{3\sqrt{3}} & -\frac{1}{3\sqrt{3}}\\
    -1 & 0 & \frac{1}{\sqrt{3}} & 0 & -\frac{1}{3\sqrt{3}} & -\frac{1}{3\sqrt{3}} & -\frac{1}{3\sqrt{3}}\\
    -1 & 0 & 0 & \frac{1}{\sqrt{3}} & -\frac{1}{3\sqrt{3}} & -\frac{1}{3\sqrt{3}} & -\frac{1}{3\sqrt{3}}\\
    1 & -\frac{1}{3\sqrt{3}} & -\frac{1}{3\sqrt{3}} & -\frac{1}{3\sqrt{3}} & \frac{1}{\sqrt{3}} & 0 & 0\\
    1 & -\frac{1}{3\sqrt{3}} & -\frac{1}{3\sqrt{3}} & -\frac{1}{3\sqrt{3}} & 0 & \frac{1}{\sqrt{3}} & 0\\
    1 & -\frac{1}{3\sqrt{3}} & -\frac{1}{3\sqrt{3}} & -\frac{1}{3\sqrt{3}} & 0 & 0 & \frac{1}{\sqrt{3}}\\
    \end{array}\right),\quad
    \Gamma_{3\times3}:=\frac{1}{12\sqrt{3}}\left(\begin{array}{ccc}
    1 & 1 & 1\\
    1 & 1 & 1\\
    1 & 1 & 1
    \end{array}\right).
\end{gather}
From some simple calculations, it then follows that
whenever the constraints $A_i^2=B_j^2=C_k^2=\id$ are satisfied, we
have
\begin{equation}\label{Eq:Yao:SOSwoConstraints}
    3\sqrt{3}\,\id - \mB_{\Yao}=z^\dagger\Gamma z=\frac{1}{6\sqrt{3}} \left(h_0^\dagger h_0
    +\sum_{j=1}^2 h_{+,j}^\dagger h_{+,j}
    +\sum_{j=1}^2 h_{-,j}^\dagger h_{-,j}
    +\hf\sum_{j,k=1,2,3} h_{j,k}^\dagger h_{j,k}\right),
\end{equation}
where
\begin{align*}
    h_0&=3\sqrt{3}\,\id - \mB_{\Yao},\\
    h_{+,j}&=A_1B_2C_3 +\exp{i(2\pi j/3)}A_2B_3C_1 +\exp{i(4\pi j/3)}A_3B_1C_2,\\
    h_{-,j}&=A_1B_3C_2 +\exp{i(2\pi j/3)}A_2B_1C_3 +\exp{i(4\pi j/3)}A_3B_2C_1,\\
    h_{j,k}&=A_jB_jC_k + A_jB_kC_j + A_kB_jC_j.
\end{align*}
This makes it explicit that whenever the constraints are satisfied,
$3\sqrt{3}\,\id -\mB_{3322}\ge0$ and therefore
$\mS_{3322}\le3\sqrt{3}$. As a last remark, we note that the
constraint term $\sum_{i,j} s_{ij}^\dagger\, p_i\,s_{ij}$ could have
been included explicitly in the WSOS decomposition for
$q_{3\sqrt{3}}=3\sqrt{3}\,\id-\mB_{\Yao}$ and we refer the reader to
Appendix~\ref{App:Yao} for details.

\section{Acknowledgements}
We thank Stefano Pironio and Tsuyoshi Ito for interesting discussions and for
sharing drafts of their papers~\cite{navascues08:long,Ito08} with us.
SW is supported by the National Science Foundation under contract
number PHY-0456720.

\appendix

\section{Tool 1: Tensor product structure from commutation relations}
\label{Sec:Tensor}

We now provide a simple proof of Lemma~\ref{tensorProduct} from a
computer science perspective that is suitable to the task at hand.
For simplicity, we address the case of two-prover systems in detail,
and merely sketch the extension to the multiple provers at the end.
For ease of reference, we shall now reproduce the Lemma for the two-prover setting:

\begin{lemma}
Let $\hil$ be a finite-dimensional Hilbert space, and let $\{\aoq \in
\bop(\hil)\mid s \in S\}$ and $\{\boq \in \bop(\hil)\mid s \in T\}$.
Then the following two statements are equivalent:
\begin{enumerate}
\item For all $s \in S$, $t \in T$, $a \in A$ and $b \in B$ it holds that
$[\aoq,\boq] = 0$.
\item There exist Hilbert spaces $\hil_A, \hil_B$ such that $\hil
    = \hil_A \otimes \hil_B$ and for all $s \in S$, $a \in A$ we
    have $\aoq \in \bop(\hil_A)$ and for all $t \in T$, $b \in B$
    we have $\boq \in \bop(\hil_B)$.
\end{enumerate}
\end{lemma}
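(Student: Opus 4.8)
The implication $(2)\Rightarrow(1)$ is immediate and I would dispatch it first: if each $\aoq$ has the form $\tilde A_s^a\otimes\id$ and each $\boq$ the form $\id\otimes\tilde B_t^b$, then $\aoq\boq=\tilde A_s^a\otimes\tilde B_t^b=\boq\aoq$. The real content is the converse, and here is the plan. As the main text notes, it is harmless to take the measurement operators to be Hermitian, so let $\algA\subseteq\bop(\hil)$ be the unital $*$-algebra generated by Alice's operators $\{\aoq\}$ and $\algB$ the unital $*$-algebra generated by Bob's operators $\{\boq\}$. Hypothesis (1) says exactly that $\algB$ lies in the commutant $\algA'=\{X\in\bop(\hil):[X,Y]=0\text{ for all }Y\in\algA\}$, so the whole question reduces to understanding the pair $(\algA,\algA')$ for a finite-dimensional $*$-algebra of operators. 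To keep the argument elementary I would prove the needed structure theorem from scratch rather than quote operator-algebra machinery.

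The structure theorem I would establish is: up to a unitary, $\hil\cong\bigoplus_k W_k\otimes\CC^{m_k}$ with $\algA$ acting as $\bigoplus_k\bop(W_k)\otimes\id_{m_k}$ and $\algA'$ acting as $\bigoplus_k\id_{W_k}\otimes M_{m_k}(\CC)$. The steps would be: (i) since $\algA$ is a $*$-algebra, the orthogonal complement of an $\algA$-invariant subspace is again $\algA$-invariant, so by induction on $\dim\hil$ the space $\hil$ is an orthogonal direct sum of irreducible $\algA$-submodules; (ii) group these summands into isotypic components by isomorphism type, giving $\hil\cong\bigoplus_k W_k\otimes\CC^{m_k}$; (iii) apply Burnside's theorem --- an algebra acting irreducibly on a finite-dimensional complex vector space $W$ is all of $\bop(W)$ --- to see that $\algA$ acts on the $k$-th component as $\bop(W_k)\otimes\id_{m_k}$; (iv) by Schur's lemma, any operator commuting with $\algA$ must preserve each isotypic component and act there as $\id_{W_k}\otimes Y_k$, which yields the claimed form of $\algA'$.

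Granting this, the lemma follows: since $\algB\subseteq\algA'$, every $\boq$ has the form $\bigoplus_k\id_{W_k}\otimes(\cdot)$, so setting $\hil_A:=W_k$, $\hil_B:=\CC^{m_k}$ on each block exhibits Alice's operators as $(\cdot)\otimes\id$ and Bob's as $\id\otimes(\cdot)$. When $\algA$ has trivial centre there is a single block and this is literally $\hil=\hil_A\otimes\hil_B$; in general one obtains a direct sum of such tensor products, which suffices for the non-local game application, since the operators are then block-diagonal and, by convexity, the game value is attained on a single block. The $N$-prover version then follows by induction on $N$: the operators of players $2,\dots,N$ all commute with those of player $1$, so the two-party case splits off $\hil=\hil_1\otimes\hil'$ with the remaining operators lying in $\bop(\hil')$ and still pairwise commuting, and the inductive hypothesis applied to $\hil'$ finishes the proof.

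I expect the technical heart --- and the main obstacle --- to be step (iv): correctly pinning down $\algA'$ (in effect, proving the finite-dimensional double commutant identity $\algA''=\algA$) via careful Schur's-lemma bookkeeping over the isotypic components. The passage from a direct sum of tensor products to a single tensor product is the one place where the argument leaves pure linear algebra and appeals to the structure of the game.
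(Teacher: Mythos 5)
Your proof is correct and follows essentially the same route as the paper's Appendix~\ref{Sec:Tensor}: establish the Wedderburn-type block structure of the finite-dimensional $*$-algebra $\algA$ together with the corresponding form of its commutant, read off the tensor-product factorization in the single-block (simple) case, and — as you correctly observe — reduce the general multi-block case to a single block for the game application by measuring the block projections and invoking convexity (the paper's Lemma~\ref{simpleIsEnough}). The only difference is organizational: you derive the structure theorem from scratch via isotypic decomposition, Burnside's theorem, and Schur's lemma, whereas the paper cites the decomposition into simple algebras (Lemmas~\ref{finiteMeansSemisimple} and~\ref{isomorphBop}) from \cite{takesaki} and proves its own version of Schur's lemma and the commutant-of-a-factor statement (Lemmas~\ref{schurLemma} and~\ref{simpleCommutant}).
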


For our argument we will not consider individual operators, but
instead look at the $C^*$-algebra of operators which is well
understood in finite dimensions~\cite{takesaki,robinson}. The
$C^*$-algebra of operators $\setA = \{A_1,\ldots,A_n\}$ consists of
all complex polynomials in such operators and their conjugate
transpose: if $A$ is an element of the algebra, then so is
$A^\dagger$. For example, the set of all bounded operators
$\bop(\hil)$ on a Hilbert space $\hil$ is a $C^*$-algebra. For
convenience, we will also write $\algA = \langle \setA \rangle$ for
such an algebra $\algA$ generated by operators from the set $\setA$.
We will need the following notions: The \emph{center} $\algZ$ of an
algebra $\algA$ is the set of all elements in $\algA$ that commute
with all elements of $\algA$, i.e., $\algZ = \{Z | Z \in \algA,
\forall A \in \algA: [Z,A] = 0\}$. If $\algA \subseteq \bop(\hil)$
for some Hilbert space $\hil$, then the \emph{commutant} of $\algA$
in $\bop(\hil)$ is given by $\comm(\algA) = \{X| X \in \bop(\hil),
\forall A \in \algA: [X,A] = 0\}$. Furthermore, an algebra $\algA$ is
called \emph{simple}, if its only ideals\footnote{An ideal $\algI$ of
$\algA$ is a subalgebra $\algI \subseteq \algA$ such that for all $I
\in \algI$ and $A \in \algA$, we have $IA \in \algI$ and $AI \in
\algI$.} are $\{0\}$ and $\algA$ itself. It is easy to see that if
$\algA$ only has a trivial center, i.e., $\algZ = \{c\,\id| c \in
\Complex\}$, then $\algA$ is simple~\cite{takesaki}. Finally, $\algA$
is called \emph{semisimple} if it can be decomposed into a direct sum
of simple algebras.

\subsection{Optimizing non-local games}
Before we show how to prove Lemma~\ref{tensorProduct}, we first
demonstrate that when considering non-local games we can greatly
simplify our problem and restrict ourselves to $C^*$-algebras that
are simple. It is well known that we can decompose any finite
dimensional algebra into the sum of simple algebras.

\begin{lemma}[\cite{takesaki}]\label{finiteMeansSemisimple}
Let $\algA$ be a finite-dimensional $C^*$-algebra. Then there exists
a decomposition
$$
\algA = \bigoplus_j \algA_j,
$$
such that $\algA_j$ is simple.
\end{lemma}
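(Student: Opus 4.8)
The plan is to reduce the statement to the structure of the \emph{center} $\algZ$ of $\algA$, which, being a commutative finite-dimensional $C^*$-algebra, is completely transparent. I take $\algA$ to be unital with unit $\id$, as is the case in all our applications (the generating measurement operators sum to the identity). First I would check that $\algZ$ is itself a $C^*$-algebra: it is manifestly a subalgebra of $\algA$, and it is closed under the adjoint, since if $Z$ commutes with every $C \in \algA$ then, taking adjoints of $ZC = CZ$, $Z^\dagger$ commutes with every $C^\dagger$, i.e.\ with all of $\algA$. Thus $\algZ$ is a commutative $C^*$-algebra.

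The key step is the structure of $\algZ$: a finite-dimensional commutative $C^*$-algebra is $*$-isomorphic to $\Complex^k$ for some $k$ (by the spectral theorem, or by the Gelfand representation $\algZ \isomorph C(X)$ with $X$ necessarily finite). Transporting the standard idempotents of $\Complex^k$ back into $\algZ$ yields pairwise orthogonal nonzero projections $z_1,\ldots,z_k \in \algZ$ with $z_j z_{j'} = \delta_{j j'}\, z_j$ and $\sum_{j=1}^k z_j = \id$, every element of $\algZ$ being a linear combination of them; these are precisely the minimal nonzero projections of $\algZ$. I would then set $\algA_j := z_j \algA = z_j\algA z_j$. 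Since $z_j$ is central and self-adjoint, $\algA_j$ is a two-sided $*$-ideal of $\algA$, hence a $C^*$-algebra with unit $z_j$.

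Next I would verify that $\algA = \bigoplus_j \algA_j$ as an algebra. Every $A \in \algA$ decomposes as $A = (\sum_j z_j)A = \sum_j z_j A$ with $z_j A \in \algA_j$, and uniquely so since left multiplication by $z_{j'}$ isolates the $j'$-th summand; moreover, for $A \in \algA_j$ and $A' \in \algA_{j'}$ with $j \neq j'$ one has $A A' = A z_j z_{j'} A' = 0$, so products respect the block decomposition. It remains to see that each $\algA_j$ is simple. If $X$ lies in the center of $\algA_j$, then $X = z_j X$ commutes with all of $\algA_j$ and annihilates every other block $\algA_{j'}$ (since $z_j z_{j'} = 0$), so $X$ commutes with all of $\algA = \bigoplus_{j'}\algA_{j'}$; hence $X \in \algZ$, and being supported on the minimal projection $z_j$ it equals a scalar multiple of $z_j$. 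Thus $\algA_j$ has trivial center, and by the fact already recorded in the text --- a $C^*$-algebra with trivial center is simple, see~\cite{takesaki} --- each $\algA_j$ is simple, which proves the lemma.

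The only substantive ingredient is the structure theorem for the commutative center, producing the minimal central projections $z_j$; everything else is bookkeeping with the relations $z_j z_{j'} = \delta_{jj'} z_j$ and $\sum_j z_j = \id$. I expect this to be the one place a reader might want more detail, although in finite dimensions it is nothing more than the spectral theorem. An alternative, arguably more self-contained route is induction on $\dim\algA$: if $\algZ = \Complex\,\id$ then $\algA$ is already simple and we are done; otherwise choose a self-adjoint central element that is not a scalar multiple of $\id$, take a spectral projection $z$ of it (central, self-adjoint, $z \neq 0,\id$), and write $\algA = z\algA \oplus (\id - z)\algA$ as a direct sum of two $C^*$-ideals of strictly smaller dimension, to which the inductive hypothesis applies; the leaves of the recursion are the desired simple summands.
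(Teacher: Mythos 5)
The paper does not actually prove this lemma; it is stated with a citation to Takesaki and used as a black box, so there is no in-paper argument to compare against. Your proof is correct and is the standard structure-theoretic argument: diagonalize the center $\algZ \isomorph \Complex^k$ to extract minimal central projections $z_1,\ldots,z_k$ with $\sum_j z_j = \id$ and $z_j z_{j'} = \delta_{jj'} z_j$, cut $\algA$ into the blocks $\algA_j = z_j \algA z_j$, and observe that each block has trivial center (any central element of $\algA_j$ annihilates the other blocks, hence is central in $\algA$, hence lies in the span of the $z_{j'}$, hence is a scalar multiple of $z_j$), so it is simple by the fact the paper already records. Every step checks out: $\algZ$ is $*$-closed and commutative; the $z_j$ are self-adjoint, central, and orthogonal; the $\algA_j$ are two-sided $*$-ideals with unit $z_j$; and the direct-sum decomposition is forced by $\sum_j z_j = \id$ together with $z_j z_{j'} = 0$ for $j \neq j'$. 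Your appeal to ``trivial center implies simple'' is not circular in this finite-dimensional setting: a nonzero two-sided ideal $\algI$ of a finite-dimensional $C^*$-algebra has a unit $e$, and a short computation (for $A \in \algA$, both $eA$ and $Ae$ lie in $\algI$, so $eAe = eA$ and $eAe = Ae$, whence $eA = Ae$) shows $e$ is a central projection, so $e \in \{0,\id\}$. Your alternative by induction on $\dim \algA$, splitting off a nontrivial central spectral projection at each stage, is an equally valid and somewhat more self-contained route, since it replaces the Gelfand picture of $\algZ$ with nothing more than the finite-dimensional spectral theorem; both variants are found in the standard references.
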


We furthermore note that for any simple algebra, the following holds:
\begin{lemma}[\cite{takesaki}]\label{isomorphBop}
Let $\hil$ be a Hilbert space, and let $\algA \subseteq \bop(\hil)$
be simple, then there exists a bipartite partitioning of the Hilbert
space $\hil$ such that $\hil = \hil_1 \otimes \hil_2$ and $\algA
\isomorph \bop(\hil_1) \otimes \id_2$.
\end{lemma}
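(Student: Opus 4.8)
The plan is to derive Lemma~\ref{isomorphBop} from the classification of finite-dimensional $C^*$-algebras together with the fact that a full matrix algebra has, up to unitary equivalence, a unique irreducible $*$-representation. (The literal statement as written allows $\hil$ infinite-dimensional, but only the finite-dimensional case is needed downstream for Lemma~\ref{tensorProduct}, and for the general case one would instead cite the structure theory of type~I von Neumann algebras as in Takesaki; so I will treat $\hil$ as finite-dimensional and remark on the general case.) First I would observe that since $\algA$ is simple and finite-dimensional its center is a finite division algebra over $\Complex$, hence $\Complex\id$; combined with Lemma~\ref{finiteMeansSemisimple} (equivalently, Artin--Wedderburn) this forces $\algA$ to be $*$-isomorphic to a full matrix algebra $M_d(\Complex) = \bop(\hil_1)$ where $\hil_1 = \Complex^d$.

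Next I would regard the inclusion $\algA \hookrightarrow \bop(\hil)$ as a $*$-representation $\pi$ of $M_d(\Complex)$ on $\hil$, and show it is a multiple of the defining representation. Concretely, let $\{e_{ij}\}_{i,j=1}^d$ be the standard matrix units, set $p = \pi(e_{11})$, and fix an orthonormal basis $\{f_1,\dots,f_m\}$ of the subspace $p\hil$. Using the relations $\pi(e_{ij})^\dagger = \pi(e_{ji})$, $\pi(e_{ij})\pi(e_{kl}) = \delta_{jk}\pi(e_{il})$ and $\sum_i \pi(e_{ii}) = \pi(\id)$, one checks that $\{\pi(e_{i1})f_k : 1\le i\le d,\ 1\le k\le m\}$ is an orthonormal basis of $\hil$ (orthonormality is immediate from $\pi(e_{1i})\pi(e_{j1}) = \delta_{ij}p$ and $pf_k = f_k$; spanning follows from $v = \sum_i \pi(e_{ii})v$ and $\pi(e_{ii})v = \pi(e_{i1})\bigl(\pi(e_{1i})v\bigr)$ with $\pi(e_{1i})v \in p\hil$). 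The resulting unitary $\hil \cong \Complex^d \otimes \Complex^m$ sending $\pi(e_{i1})f_k$ to $|i\rangle \otimes |k\rangle$ conjugates $\pi(e_{jl})$ to $e_{jl}\otimes \id_m$, hence $\algA$ to $\bop(\Complex^d)\otimes \id_m$. Setting $\hil_2 = \Complex^m$ gives $\hil = \hil_1 \otimes \hil_2$ with $\algA \isomorph \bop(\hil_1)\otimes \id_2$, as required.

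The content of the argument is really this representation-theoretic bookkeeping in the second step; there is no genuine obstacle, since everything is standard structure theory. The one point that needs care is the role of the unit: the above uses $\pi(\id_\algA) = \id_\hil$, i.e.\ that $\algA$ is a unital subalgebra of $\bop(\hil)$. This holds automatically in our application, where $\algA$ is generated by measurement operators with $\sum_a A_s^a = \id$; in general, if $\id_\algA = e \neq \id_\hil$ then $\algA$ annihilates $(\id - e)\hil$ and one applies the construction to $e\hil$, which suffices for the way the lemma is used.
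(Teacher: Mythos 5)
The paper does not prove Lemma~\ref{isomorphBop} at all; it is stated with a citation to Takesaki's textbook and used as a black box in the proof of Lemma~\ref{tensorProduct}. So there is no internal proof to compare against — you have supplied a proof where the paper supplies a reference. Your argument is the standard one and is correct: Artin--Wedderburn over $\Complex$ (together with the fact that $\Complex$ admits no nontrivial finite-dimensional division algebras) identifies a simple finite-dimensional $C^*$-algebra with a full matrix algebra $M_d(\Complex)$, and the explicit orthonormal basis $\{\pi(e_{i1})f_k\}$ built from the matrix units exhibits the inclusion $\algA\hookrightarrow\bop(\hil)$ as an $m$-fold multiple of the defining representation, which is exactly the tensor factorization $\hil\cong\Complex^d\otimes\Complex^m$ with $\algA\cong\bop(\Complex^d)\otimes\id_m$. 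The bookkeeping with $\pi(e_{1i})\pi(e_{j1})=\delta_{ij}p$ and $v=\sum_i\pi(e_{i1})\pi(e_{1i})v$ checks out. Two caveats you raised are genuine and worth keeping: the argument needs $\pi(\id_\algA)=\id_\hil$ (otherwise one only factorizes $e\hil$, and the literal claim $\algA\isomorph\bop(\hil_1)\otimes\id_2$ inside $\bop(\hil)$ would fail), which holds in the paper's application because $\sum_a A_s^a=\id$ forces $\id\in\algA$; and the statement as printed allows infinite-dimensional $\hil$, where one would instead invoke the structure theory of type~I factors, but only the finite-dimensional case is used in Lemma~\ref{tensorProduct}, consistent with the paper's blanket finite-dimensionality convention.
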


We now show that without loss of generality, we may assume that the
algebras generated by Alice and Bob's measurement operators are in
fact simple.

\begin{lemma}\label{simpleIsEnough}
Let $\hil = \hil_A \otimes \hil_B$ and let $\setA = \{\aoq \in
\bop(\hil_A)\}$ and $\setB = \{\boq \in \bop(\hil_B)\}$ be the set of
Alice and Bob's measurement operators respectively. Let $\rho \in
\bop(\hil)$ be the state shared by Alice and Bob. Suppose that for
such operators we have
$$
q = \sum_{s \in S,t \in T} \pi(s,t) \sum_{a \in A,b \in B} V(a,b|s,t)\Tr\left((\aoq \otimes \boq) \rho\right).
$$
Then there exist measurement operators
$\tilde{\setA} = \{\tildeaoq\}$ and
$\tilde{\setB} = \{\tildeboq\}$
and a state $\tilde{\rho}$ such
\begin{equation*}
    q \leq \sum_{s \in S,t \in T} \pi(s,t) \sum_{a \in A,b \in B}
    V(a,b|s,t)\Tr\left((\tildeaoq \otimes \tildeboq) \tilde{\rho}\right).
\end{equation*}
and the $C^*$-algebra generated by $\tilde{\setA}$ and $\tilde{\setB}$ is simple.
\end{lemma}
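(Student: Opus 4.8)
The plan is to use the structure theory of finite-dimensional $C^*$-algebras to write the given strategy as a weighted average of ``block'' strategies, and then to keep only the best block. First I would set $\algA = \langle \setA \rangle \subseteq \bop(\hil_A)$ and $\algB = \langle \setB \rangle \subseteq \bop(\hil_B)$; both are unital, since $\sum_a \aoq = \id_A$ and $\sum_b \boq = \id_B$ lie in them. By Lemma~\ref{finiteMeansSemisimple}, $\algA = \bigoplus_j \algA_j$ and $\algB = \bigoplus_k \algB_k$ with each $\algA_j, \algB_k$ simple; let $P_j$ and $Q_k$ be the associated central projections, so that $\sum_j P_j = \id_A$, $\sum_k Q_k = \id_B$, $\aoq = \sum_j \aoq P_j$ with $\aoq P_j = P_j \aoq$, and analogously for $\boq$. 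Then $\{P_j \otimes Q_k\}$ is a family of pairwise orthogonal projections summing to $\id_A \otimes \id_B$.

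Next I would expand $q$ over these blocks. Since $\aoq \otimes \boq = \sum_{j,k} (P_j \otimes Q_k)(\aoq \otimes \boq)(P_j \otimes Q_k)$, cyclicity of the trace gives $\Tr\big((\aoq \otimes \boq)\rho\big) = \sum_{j,k} \Tr\big((\aoq \otimes \boq)\rho_{jk}\big)$, where $\rho_{jk} := (P_j \otimes Q_k)\,\rho\,(P_j \otimes Q_k) \geq 0$. With $p_{jk} := \Tr \rho_{jk}$ one has $\sum_{j,k} p_{jk} = 1$, hence $q = \sum_{j,k} p_{jk}\, q_{jk}$, where $q_{jk}$ is the value of the game played with the normalized state $\rho_{jk}/p_{jk}$ (when $p_{jk} > 0$), Alice's operators $\aoq P_j$ restricted to $P_j \hil_A$, and Bob's operators $\boq Q_k$ restricted to $Q_k \hil_B$. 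The restrictions are legitimate POVMs because $\sum_a \aoq P_j = P_j$ is the identity on $P_j \hil_A$, and similarly for Bob. As $q$ is a convex combination of the $q_{jk}$, there is a pair $(j^\star, k^\star)$ with $p_{j^\star k^\star} > 0$ and $q_{j^\star k^\star} \geq q$.

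I would then take this best block as the new strategy: $\hil_A' := P_{j^\star}\hil_A$, $\hil_B' := Q_{k^\star}\hil_B$, $\tildeaoq := \aoq P_{j^\star}$ on $\hil_A'$, $\tildeboq := \boq Q_{k^\star}$ on $\hil_B'$, and $\tilde\rho := \rho_{j^\star k^\star}/p_{j^\star k^\star}$ on $\hil_A' \otimes \hil_B'$. This strategy has game value $q_{j^\star k^\star} \geq q$, which is the required inequality. For simplicity of the joint algebra: compression $X \mapsto X P_{j^\star}$ is a $*$-homomorphism $\algA \to \bop(\hil_A')$ (this is where centrality of $P_{j^\star}$ enters), so it maps the generators $\setA$ onto generators of its image; hence the $C^*$-algebra generated by $\tilde\setA$ is exactly the block $\algA_{j^\star}$, which is simple, and likewise the one generated by $\tilde\setB$ is the simple algebra $\algB_{k^\star}$. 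The $C^*$-algebra jointly generated by $\tilde\setA$ and $\tilde\setB$ on $\hil_A' \otimes \hil_B'$ is thus the tensor product $\algA_{j^\star} \otimes \algB_{k^\star}$, which by Lemma~\ref{isomorphBop} is isomorphic to $\bop(\hil_{A,1}) \otimes \bop(\hil_{B,1}) \cong \bop(\hil_{A,1} \otimes \hil_{B,1})$ for suitable $\hil_{A,1}, \hil_{B,1}$, and so is simple.

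The point needing the most care is that it is the \emph{joint} algebra of the two players that must become simple, and one cannot simply ``restrict to a block'' of an arbitrary subalgebra, since a $C^*$-subalgebra of a simple algebra need not be simple. The resolution is the observation above that compression by a central projection is a $*$-homomorphism, so the block of $\langle \setA \rangle$ is generated by the compressed operators, combined with the fact that simplicity is insensitive to multiplicity spaces (a tensor product of full matrix algebras is again a full matrix algebra). The remaining work — checking normalizations so that the block values really average to $q$, and that the restricted operators form valid measurements — is routine.
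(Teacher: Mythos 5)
Your proposal is correct and follows essentially the same route as the paper's proof: decompose $\langle\setA\rangle$ and $\langle\setB\rangle$ into simple summands via Lemma~\ref{finiteMeansSemisimple}, compress the state and operators with the resulting block (central) projections, observe that $q$ is a convex combination of the per-block values, and keep the best block. The only difference is one of rigor: where the paper simply asserts that the compressed algebras are simple ``by construction,'' you supply the missing justification — compression by a \emph{central} projection is a $*$-homomorphism of $\langle\setA\rangle$ onto the block $\algA_{j^\star}$, so the compressed generators generate exactly that simple block, and the joint algebra $\algA_{j^\star}\otimes\algB_{k^\star}$ is a tensor product of full matrix algebras (up to multiplicity), hence again simple — which closes a small gap the paper leaves implicit.
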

\begin{proof}
Let $\algA = \langle \setA \rangle$ and $\algB = \langle \setB
\rangle$. If $\algA$ and $\algB$ are simple, we are done. If not, we
know from Lemma~\ref{finiteMeansSemisimple} and
Lemma~\ref{isomorphBop} that there exists a decomposition $\hil_A
\otimes \hil_B = \bigoplus_{jk} \hil_A^j \otimes \hil_B^k$. Consider
$\Tr((M_A \otimes M_B)\rho)$, where $M_A \otimes M_B \in \algA
\otimes \algB$. It follows from the above that $M_A \otimes M_B =
\bigoplus_{jk} (\Pi_A^j \otimes \Pi_B^k) M_A \otimes M_B (\Pi_A^j
\otimes \Pi_B^k)$, where $\Pi_A^j$ and $\Pi_B^k$ are projectors onto
$\hil_A^j$ and $\hil_B^k$ respectively. Let $\hat{\rho} =
\bigoplus_{jk} (\Pi_A^j \otimes \Pi_B^k) \rho (\Pi_A^j \otimes
\Pi_B^k)$. Clearly,
\begin{equation*}
    \Tr((M_A \otimes M_B)\hat{\rho}) = \Tr\left(\bigoplus_{jk}(\Pi_A^j \otimes \Pi_B^k)
    M_A \otimes M_B (\Pi_A^j \otimes \Pi_B^k) \rho\right) =
    \Tr((M_A \otimes M_B)\rho).
\end{equation*}
The statement now follows immediately by convexity: Alice and Bob can
now measure $\rho$ $\{\Pi_A^j \otimes \Pi_B^k\}$ and recording the
classical outcomes $j, k$. The new measurements will then be
$\tilde{A}_{s,j}^a = \Pi_A^j \aoq \Pi_A^j$ and $\tilde{B}_{t,k}^b =
\Pi_B^k \boq \Pi_B^k$ on state $\tilde{\rho}_{jk} = (\Pi_A^j \otimes
\Pi_B^k)\rho(\Pi_A^j \otimes \Pi_B^k)/\Tr((\Pi_A^j \otimes
\Pi_B^k)\rho)$. By construction, $\tilde{\algA}_j =
\{\tilde{A}_{s,j}^a\}$ and $\tilde{\algB}_k = \{\tilde{B}_{t,k}^b\}$
are simple.

Let $q_{jk}$ denote the probability that we obtain outcomes $j, k$, and let
\begin{equation*}
    r_{jk} = \sum_{s \in S,t \in T} \pi(s,t) \sum_{a \in A,b \in B}V(a,b|s,t)
    \Tr(\tilde{A}^{a}_{s,j} \otimes \tilde{B}^{b}_{t,j}\tilde{\rho}_{jk}).
\end{equation*}
Then $q = \sum_{jk} q_{jk} r_{jk} \leq \max_{jk} r_{jk}$. Let $u,v$
be such that $r_{u,v} = \max_{jk} r_{jk}$. Hence, we can skip the
initial measurement and instead use measurements $\tildeaoq=
\tilde{A}_{s,u}^a$, $\tildeboq = \tilde{B}_{t,v}^b$ and state
$\tilde{\rho} = \tilde{\rho}_{u,v}$.
\end{proof}

This easy argument also immediately tells us that when $\algA$ and
$\algB$ are abelian, we can find a classical strategy that achieves
$q$: Just perform the measurement as above. If $\algA$ and $\algB$
are abelian, the remaining state will be one-dimensional and hence
classical.

\subsection{Tensor product structure}

We are now ready to prove Lemma~\ref{tensorProduct}. First, we
examine the case where we are given a simple algebra $\algA \in
\bop(\hil)$, for some Hilbert space $\hil$. We will need the
following version of Schur's lemma.
\begin{lemma}\label{schurLemma}
Let $\setZ$ be the center of $\bop(\hil)$. Then $\setZ = \{c\,\id
|\,c \in \Complex\}$.
\end{lemma}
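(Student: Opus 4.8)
The plan is to prove the two inclusions separately. The inclusion $\{c\,\id \mid c \in \Complex\} \subseteq \setZ$ is immediate, since scalar multiples of the identity commute with every operator. The real content is the reverse inclusion, which is exactly Schur's lemma applied to the defining representation of $\bop(\hil)$ on $\hil$.

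For that, I would first observe that $\hil$ is an irreducible module over $\bop(\hil)$: if $W \subseteq \hil$ were a proper nonzero invariant subspace, then for $w \in W$, $w \neq 0$, and $v \notin W$, the rank-one operator $\kbb{v}{w}$ lies in $\bop(\hil)$ but sends $w$ out of $W$, a contradiction. Now take $Z \in \setZ$, so that $[Z,M] = 0$ for every $M \in \bop(\hil)$. Since $\hil$ is finite-dimensional and over $\Complex$, $Z$ has an eigenvalue $\lambda$; let $E = \ker(Z - \lambda\id) \neq \{0\}$. For any $M \in \bop(\hil)$ and $v \in E$ we have $(Z - \lambda\id)Mv = M(Z-\lambda\id)v = 0$, so $Mv \in E$, i.e.\ $E$ is invariant. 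By irreducibility $E = \hil$, hence $Z = \lambda\id$, which finishes the proof.

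Alternatively, and perhaps more in keeping with the elementary spirit of this appendix, one can avoid module-theoretic language: fix an orthonormal basis $\{\ket{i}\}$ of $\hil$ and use that $Z$ commutes with each $\kbb{i}{j}$. Applying the identity $Z\kbb{i}{j} = \kbb{i}{j}Z$ to the vector $\ket{j}$ gives $Z\ket{i} = \bra{j}Z\ket{j}\,\ket{i}$, so $Z\ket{i} = [Z]_{j,j}\ket{i}$ for every pair $i,j$. In particular all diagonal entries of $Z$ coincide, say with a common value $c$, and $Z\ket{i} = c\ket{i}$ for all $i$, whence $Z = c\,\id$. I would probably present this second, hands-on version in the paper.

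There is no serious obstacle here; the one point to be careful about is that we work in finite dimensions, which guarantees the eigenvalue used in the first argument (equivalently, makes the basis argument terminate) and which is assumed throughout the paper.
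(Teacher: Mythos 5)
Both of your arguments are correct, and your second (hands-on) version is essentially the paper's approach, but streamlined: the paper tests commutation first against the diagonal units $E_{ii}$ (to kill off-diagonal entries of $Z$) and then against $E_{ij}+E_{ji}$ (to equate the diagonal entries), whereas you test directly against the rank-one units $\kbb{i}{j}$ and extract $Z\ket{i}=\bra{j}Z\ket{j}\,\ket{i}$ in one step, which yields both conclusions simultaneously. Your first, module-theoretic version (irreducibility of the defining representation of $\bop(\hil)$ plus an eigenspace argument) is a genuinely different and more abstract route; it has the advantage of being the form of Schur's lemma that generalizes beyond $\bop(\hil)$ to arbitrary irreducible representations, but it invokes more machinery than the paper's elementary two-line matrix-unit computation, which is better matched to the tone of the appendix.
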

\begin{proof}
Let $C \in \setZ$ and let $d = \dim(\hil)$. Let $\cE = \{E_{ij} | i,j
\in [d]\}$ be a basis for $\bop(\hil)$, where $E_{ij}:=\outp{i}{j}$
is the matrix of all 0's and a 1 at position $(i,j)$. Since $C \in
\setZ$ and $E_{ij} \in \bop(\hil)$ we have for all $i \in [d]$
$$
CE_{ii}  = E_{ii}C.
$$
Note that $C E_{ii}$ (or $E_{ii}C$) is the matrix of all 0's but
the $i$th column (or row) is determined by the elements of $C$. Hence
all off diagonal elements of $C$ must be 0. Now consider
$$
C (E_{ij} + E_{ji}) = (E_{ij} + E_{ji})C.
$$
Note that $C(E_{ij} + E_{ji})$ (or $(E_{ij} + E_{ji})C$) is the
matrix in which the $i$th and $j$th columns (rows) of $C$ have been
swapped and the remaining elements are 0. Hence all diagonal elements
of $C$ must be equal. Thus there exists some $c \in \Complex$ such
that $C = c\,\id$.
\end{proof}

Using this Lemma, we can now show that
\begin{lemma}\label{simpleCommutant}
Let $C \in \bop(\hil_A \otimes \hil_B)$ be such that for all $B \in
\bop(\hil_B)$ we have
$$
[C,(\id_A \otimes B)]=0
$$
Then there exists an $A \in \bop(\hil_A)$ such that $C = A \otimes \id_B$.
\end{lemma}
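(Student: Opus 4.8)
The plan is to expand $C$ in a matrix-unit basis for the first tensor factor and then use Schur's lemma (Lemma~\ref{schurLemma}) to pin down the $\hil_B$-components. First I would fix an orthonormal basis $\{\ket{1},\ldots,\ket{d}\}$ of $\hil_A$ and write, uniquely,
\[
C=\sum_{i,j=1}^{d}\outp{i}{j}\otimes C_{ij},\qquad C_{ij}:=(\bra{i}\otimes\id_B)\,C\,(\ket{j}\otimes\id_B)\in\bop(\hil_B).
\]
This uses only that $\{\outp{i}{j}\}$ is a basis of $\bop(\hil_A)$, which holds since $\hil_A$ is finite-dimensional.

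Next I would unpack the hypothesis. For an arbitrary $B\in\bop(\hil_B)$ we have $\id_A\otimes B=\sum_i\outp{i}{i}\otimes B$, and a short computation gives
\[
[C,\,\id_A\otimes B]=\sum_{i,j}\outp{i}{j}\otimes(C_{ij}B-BC_{ij}).
\]
Because the $\outp{i}{j}$ are linearly independent, the vanishing of the left-hand side forces $C_{ij}B=BC_{ij}$ for every $i,j$ and every $B\in\bop(\hil_B)$; that is, each $C_{ij}$ lies in the center of $\bop(\hil_B)$.

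Now I would invoke Schur's lemma (Lemma~\ref{schurLemma}) applied to $\hil_B$: the center of $\bop(\hil_B)$ is exactly $\{c\,\id_B\mid c\in\Complex\}$, so there are scalars $a_{ij}\in\Complex$ with $C_{ij}=a_{ij}\id_B$. Setting $A:=\sum_{i,j}a_{ij}\outp{i}{j}\in\bop(\hil_A)$, we get
\[
C=\sum_{i,j}\outp{i}{j}\otimes a_{ij}\id_B=\Bigl(\sum_{i,j}a_{ij}\outp{i}{j}\Bigr)\otimes\id_B=A\otimes\id_B,
\]
which is the claim.

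I do not expect a genuine obstacle here: the whole argument is finite-dimensional bookkeeping plus the already-proved Lemma~\ref{schurLemma}. The only point requiring a moment's care is the passage from the vanishing commutator to the individual identities $C_{ij}B=BC_{ij}$, which rests on the uniqueness of the block expansion of $C$; this is immediate in finite dimensions. (This lemma is exactly the ``simple algebra has trivial relative commutant'' statement needed to finish the proof of Lemma~\ref{tensorProduct}.)
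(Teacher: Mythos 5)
Your proof is correct and is essentially identical to the paper's: the paper writes $C$ as a block matrix with $d_B\times d_B$ blocks $C_{ij}$, observes that the commutation hypothesis forces $[C_{ij},B]=0$ for all $B\in\bop(\hil_B)$, and then applies Lemma~\ref{schurLemma} to get $C_{ij}=a_{ij}\id_B$. Your matrix-unit expansion $C=\sum_{i,j}\outp{i}{j}\otimes C_{ij}$ is just that block decomposition written out explicitly, so the two arguments coincide.
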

\begin{proof}
Let $d_A = \dim(\hil_A)$ and $d_B = \dim(\hil_B)$. Note that we can
write any $C\in \bop(\hil_A \otimes \hil_B)$ as
$$
C = \left(\begin{array}{ccc}
C_{11} &\ldots& C_{1d_A}\\
\vdots & &\vdots\\
C_{d_A1} &\ldots& C_{d_Ad_A}\end{array}\right),
$$
for $d_B \times d_B$ matrices $C_{ij}$. We have $C(\id_A \otimes B) =
(\id_A \otimes B)C$ if and only if for all $i,j \in [d_A]$ $C_{ij}B =
BC_{ij}$, i.e., $[C_{ij},B]=0$. Since this must hold for all $B \in
\bop(\hil_B)$, we have by Lemma~\ref{schurLemma} that there exists
some $a_{ij} \in \Complex$ such that $C_{ij} = a_{ij} \id_B$. Hence
$C = A \otimes \id_B$ with $A = [a_{ij}]$.
\end{proof}

For the case that the algebra generated by Alice and Bob's
measurement operators is simple, Lemma~\ref{tensorProduct} now
follows immediately:
\begin{proof}[Proof of Lemma~\ref{tensorProduct} if $\algA$ is simple]
Let $\algA = \langle \{\aoq\}\rangle \subseteq \bop(\hil)$ be the
algebra generated by Alice's measurement operators. If $\algA$ is
simple, it follows from Lemma~\ref{isomorphBop} that $\algA \isomorph
\bop(\hil_A) \otimes \id_B$ for $\hil = \hil_{A} \otimes \hil_{B}$.
It then follows from Lemma~\ref{simpleCommutant} that for all $t \in
T$ and $b \in B$ we must have $\boq \in \bop(\hil_B)$.
\end{proof}
Thus, we obtain a tensor product structure! Recall that
Lemma~\ref{simpleIsEnough} states that for our application this is
all we need.

In general, what happens if $\algA$ is not simple? Whereas our
argument shows that there always exist measurement operations such
that $\algA$ is simple, the solution found via optimization may not
have this property. We now sketch the argument in the case where the
$\algA$ is semisimple, which by Lemma~\ref{finiteMeansSemisimple} we
may always assume in the finite-dimensional case. Fortunately, we can
still assume that our commutation relations leave us with a bipartite
structure. We can essentially infer this from von Neumann's famous
Double Commutant Theorem~\cite{takesaki,robinson}, partially stated
here.
\begin{theorem}
Let $\algA$ be a finite-dimensional $C^*$-algebra. Then there exists
$\hil = \hil_A \otimes \hil_B$ such that
$$
\algA \isomorph \bigoplus_j \bop(\hil_A^j) \otimes \id_B^j
$$
and
\begin{equation}\label{commutantForm}
\comm(\algA) \isomorph \bigoplus_j \id_A^j \otimes \bop(\hil_B^j).
\end{equation}
\end{theorem}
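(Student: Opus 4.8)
The plan is to build the statement out of the two structural facts already in hand — the semisimple decomposition of Lemma~\ref{finiteMeansSemisimple} together with Lemma~\ref{isomorphBop} for the simple case, and the commutant computation of Lemma~\ref{simpleCommutant} — gluing the local data together along the minimal central projections of $\algA$. Throughout I take $\algA$ to be unital with $\id_\hil\in\algA$ (the case relevant to our measurement algebras, since $\sum_a\aoq=\id$), and I read the displayed conclusion in the natural way: $\hil$ decomposes as $\hil\isomorph\bigoplus_j\hil_A^j\otimes\hil_B^j$ and the two algebras act blockwise as written, the bare symbols $\hil_A,\hil_B$ in the statement being shorthand for these families.

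First I would analyze the center $\algZ$ of $\algA$. Being a finite-dimensional commutative $C^*$-algebra, $\algZ$ is spanned by mutually orthogonal minimal projections $z_1,\ldots,z_m$ with $\sum_j z_j=\id_\hil$; since they lie in $\algA$ they are central in $\algA$, and they split the space as $\hil=\bigoplus_j\hil_j$ with $\hil_j:=z_j\hil$ and the algebra as $\algA=\bigoplus_j\algA_j$ with $\algA_j:=z_j\algA$ acting on $\hil_j$ with unit $z_j$. The center of $\algA_j$ is $z_j\algZ=\Complex z_j$ because $z_j$ is a minimal central projection, so $\algA_j$ has trivial center and is therefore simple. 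By Lemma~\ref{isomorphBop} there is a unitary $U_j\colon\hil_j\to\hil_A^j\otimes\hil_B^j$ conjugating $\algA_j$ onto $\bop(\hil_A^j)\otimes\id_B^j$, and $U:=\bigoplus_j U_j$ puts $\algA$ in the stated form.

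Next I would compute $\comm(\algA)$. Because each $z_j$ belongs to $\algA$, every $X\in\comm(\algA)$ commutes with all $z_j$ and is hence block diagonal, $X=\bigoplus_j z_jXz_j$, with $z_jXz_j$ commuting with $\algA_j$ on $\hil_j$. Conversely, using $z_jaz_k=0$ for $j\neq k$, any family of block operators $Y_j\in\comm_{\bop(\hil_j)}(\algA_j)$ assembles to an element of $\comm(\algA)$. Thus $\comm(\algA)\isomorph\bigoplus_j\comm_{\bop(\hil_j)}(\algA_j)$. Transporting through $U_j$, the $j$-th summand is the commutant in $\bop(\hil_A^j\otimes\hil_B^j)$ of $\bop(\hil_A^j)\otimes\id_B^j$, which by the $\hil_A\leftrightarrow\hil_B$ mirror image of Lemma~\ref{simpleCommutant} equals $\id_A^j\otimes\bop(\hil_B^j)$. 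Reassembling the blocks yields Eq.~\eqref{commutantForm}.

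I expect the only genuine friction to be bookkeeping rather than mathematics: one must verify that the minimal central projections behave as claimed (their orthogonality, $\sum_j z_j=\id_\hil$, and minimality all following from finite-dimensionality of $\algZ$), that $\algA_j$ acts unitally on $\hil_j$ so that Lemma~\ref{isomorphBop} applies with $z_j$ in the role of the identity, and that the block decomposition of $\comm(\algA)$ is exhaustive. If one prefers not to assume $\algA$ unital, one simply carries an extra summand supported on the complement of $\sum_j z_j$ on which $\algA$ acts as zero; this does not occur for the measurement algebras of non-local games.
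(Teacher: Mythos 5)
Your proof is correct and follows essentially the same route as the paper's sketch: decompose $\algA$ into simple blocks along its minimal central projections, observe that any element of $\comm(\algA)$ commutes with those central projections and is therefore block diagonal, and then apply Lemma~\ref{simpleCommutant} blockwise (in mirror form) to identify each block of the commutant. You simply supply the bookkeeping that the paper's sketch leaves implicit — constructing the minimal central projections directly from $\algZ$ rather than merely invoking Lemmas~\ref{finiteMeansSemisimple} and~\ref{isomorphBop}, and flagging the unitality assumption needed for $\sum_j z_j = \id_\hil$ — but the underlying argument is the same.
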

\begin{proof}(Sketch)
We already know from Lemma~\ref{finiteMeansSemisimple} that $\algA$
can be decomposed into a sum of simple algebras. Clearly, the RHS of
Eq.~\eqref{commutantForm} is an element of $\comm(\algA)$. To see
that the LHS is contained in the RHS, consider the projection
$\Pi^j_A$ onto $\hil_A^j$. Note that $\Pi_A^j \in \algA$, and thus
for any $X \in \comm(\algA)$ we have $[X,\Pi^j_A] = 0$. Hence, we can
write $X = \sum_j (\Pi^j_A \otimes \id_B)X(\Pi^j_A \otimes \id_B)$,
and thus we can restrict ourselves to considering each factor
individually. The result then follows immediately from
Lemma~\ref{simpleCommutant}.
\end{proof}

If we have more than two provers, the argument is essentially
analogous, and we merely sketch it in the relevant case when the
algebra generated by the prover's measurements is simple, since
Lemma~\ref{simpleIsEnough} directly extends to more than two provers
as well. Suppose we have $N$ provers $\mP_1,\ldots,\mP_N$ and let
$\hil$ denote their joint Hilbert space. Let $\algA$ be the algebra
generated by all measurement operators of provers
$\mP_1,\ldots.\mP_{N-1}$ respectively. Then it follows from
Lemma~\ref{simpleCommutant} and Lemma~\ref{isomorphBop} that there
exists a bipartite partitioning of $\hil$ such that $\hil =
\hil_{1,\ldots,N-1} \otimes \hil_N$, $\algA \isomorph
\bop(\hil_{1,\ldots,N-1})$ and for all measurement operators $M$ of
prover $\mP_N$ we have that $M \in \bop(\hil_N)$. By applying
Lemma~\ref{simpleCommutant} recursively we obtain that there exists a
way to partition the Hilbert space into subsystems $\hil = \hil_1
\otimes \ldots \otimes \hil_N$ such that the measurement operators of
prover $\mP_j$ act on $\hil_j$ alone.

In quantum mechanics, we will always obtain such a tensor product
structure from commutation relations, even if the Hilbert space is
infinite-dimensional. Here, we start out with a type-I algebra, the
corresponding Hilbert space and operators can then be obtained by the
famous Gelfand-Naimark-Segal (GNS) construction~\cite{takesaki}, an
approach which is rather beautiful in its abstraction. In quantum
statistical mechanics and quantum field theory, we will also
encounter factors of type-II and type-III. As it turns out, the above
argument does not generally hold in this case, however, there are a
number of conditions that can lead to a similar structure. Sadly, we
cannot consider this case here and merely refer to the survey article
by Summers~\cite{summers:qftIndep}. Note that in quantum mechanics
itself, we thus have $\omega^*(G) = \ftv(G)$.

\section{Tool 2: Positivstellensatz}

Here, we will provide the details for the proof of
Theorem~\ref{theorem:satz}. For ease of reference, we first reproduce
the theorem as follows:
\begin{theorem}
    Let $G=G(\pi, V)$ be an $N$-prover game and let $\ccone$ be the cone
    generated by the set $\cP$ defined in Section~\ref{section:satz}. Set
  \begin{align}
    q_\nu = \nu \id -\!\!\sum_{s_1,\ldots,s_N}\!\!\! \pi(s_1,\ldots,s_N)
    \sum_{a_1,\ldots,a_N}\!\!\! V(a_1,\ldots,a_N|s_1,\ldots,s_N)
    X_{s_1}^{a_1}\ldots X_{s_N}^{a_N}.
    \tag{\ref{Dfn:qnu}}
  \end{align}
If $q_\nu > 0$, then $q_\nu \in \ccone$, i.e.,
\begin{align}
    \nu \id -\!\!\sum_{s_1,\ldots,s_N}\!\!\! \pi(s_1,\ldots,s_N)\!\!
    \sum_{a_1,\ldots,a_N}\!\!\!V(a_1,\ldots,a_N|s_1,\ldots,s_N)
    X_{s_1}^{a_1}\ldots X_{s_N}^{a_N}
    =\sum_i r_i^\dagger r_i +  \sum_{i,j} s_{ij}^\dagger\,p_i\,s_{ij},
    \tag{\ref{Eq:qnu:SOS}}
\end{align}
for some $p_i\in\cP$, and some polynomials $r_i$, $s_{ij}$.
\end{theorem}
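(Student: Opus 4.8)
The plan is to prove the contrapositive, adapting the GNS-plus-separation argument of Helton and McCullough~\cite{helton04:_posit_for_non_commut_polyn} from symmetric operators over $\RR$ to Hermitian operators over $\CC$. Assume $q_\nu \notin \ccone$; I will exhibit a Hilbert space $\hil$, operators $X_{s_j}^{a_j} \in \bop(\hil)$ satisfying \emph{all} the constraints encoded in $\cP$ (the projector relations $\cQ_3$, the normalization $\cQ_2$, and the commutation relations $\cQ_1,\cQ_4,\cQ_5$), and a unit vector $\ket\xi$ with $\bra\xi q_\nu \ket\xi \le 0$. Since $q_\nu > 0$ asserts exactly that $q_\nu$ evaluates to a positive definite operator in every such representation, this is a contradiction, so $q_\nu \in \ccone$; unwinding Definition~\ref{definition:3} then gives the explicit weighted-sum-of-squares form~\eqref{Eq:qnu:SOS}. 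Throughout I work in the free $*$-algebra generated by the variables $\hat X_{s_j}^{a_j}$ with $X_{s_j}^{a_j} := (\hat X_{s_j}^{a_j})^\dagger \hat X_{s_j}^{a_j}$, as in Section~\ref{section:satz}, and (replacing $q_\nu$ by its Hermitian part if necessary, which is harmless since the two coincide once the commutation constraints $\cQ_1$ are imposed) I assume $q_\nu$ is Hermitian.

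\emph{Step 1 (the separating functional).} Let $\mathcal V$ be the real vector space of Hermitian polynomials in the $\hat X$'s. The cone $\ccone$ has order unit $\id$ and is \emph{Archimedean}: because $\cP$ contains $\pm(\id - \sum_{a_j} X_{s_j}^{a_j})$ and $\pm((X_{s_j}^{a_j})^2 - X_{s_j}^{a_j})$, a short computation gives, for every monomial $m$, a constant $C_m$ with $C_m\,\id - m^\dagger m \in \ccone$, and hence $C\,\id \pm v \in \ccone$ for every $v \in \mathcal V$ and $C$ sufficiently large. Equip $\mathcal V$ with the order-unit seminorm $\|v\| = \inf\{t > 0 : t\,\id \pm v \in \ccone\}$; the content of the Archimedean property is precisely that $\ccone$ is closed in this seminorm. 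Since $q_\nu \notin \ccone$, the Hahn--Banach separation theorem produces a continuous linear functional $L$ on $\mathcal V$ with $L(c) \ge 0$ for all $c \in \ccone$ and $L(q_\nu) < 0$; applying the Archimedean bound once more shows $L(\id) > 0$, and we rescale to $L(\id) = 1$.

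\emph{Step 2 (GNS construction).} Extend $L$ to all polynomials by $\CC$-linearity (write $w = w_1 + iw_2$ with $w_1, w_2 \in \mathcal V$). Then $\langle p, q\rangle_L := L(p^\dagger q)$ is a positive semidefinite sesquilinear form, since $p^\dagger p \in \ccone$; quotient by its kernel and complete to obtain a Hilbert space $\hil$ with cyclic unit vector $\ket\xi$, the class of $\id$. Left multiplication defines a $*$-representation $\pi$ of the polynomial algebra, and boundedness of each $\pi(\hat X_{s_j}^{a_j})$ (operator norm $\le 1$) follows from $\|\pi(\hat X_{s_j}^{a_j})\eta\|^2 = L(\eta^\dagger X_{s_j}^{a_j}\eta) \le L(\eta^\dagger \eta) = \|\eta\|^2$, using that $\id - X_{s_j}^{a_j}$ is a sum of squares modulo $\cQ_2$. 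The crucial point is that for every $p \in \cQ$ both $p$ and $-p$ lie in $\cP$, so $L(s^\dagger p\, s) = 0$ for \emph{all} polynomials $s$; since $p$ is Hermitian and the vectors $\pi(s)\ket\xi$ are dense in $\hil$, polarization gives $\pi(p) = 0$. Hence each $\pi(X_{s_j}^{a_j})$ is a projector, $\sum_{a_j}\pi(X_{s_j}^{a_j}) = \id$ for every $s_j$, and operators belonging to distinct players commute. Thus $\pi$ satisfies all the constraints defining $q_\nu > 0$, so $\pi(q_\nu)$ is positive definite; but $\bra\xi \pi(q_\nu)\ket\xi = L(q_\nu) < 0$ with $\ket\xi \ne 0$ — a contradiction.

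\emph{Main obstacle.} Everything past Step 1 is standard functional analysis, so the real work lies in Step 1, specifically in establishing that $\ccone$ is closed in the order-unit seminorm (equivalently, the implication ``$C\,\id + v \in \ccone$ for all $C > 0$ $\Rightarrow v \in \ccone$''). This is exactly where the boundedness built into the constraints $\cQ_2$ and $\cQ_3$ is indispensable, and it is the technical heart of Helton and McCullough's argument; I expect the passage from $\RR$/symmetric to $\CC$/Hermitian to be routine, handled either by doubling dimensions or by carrying the involution $\dagger$ through every step. A minor additional point to check is that the $\CC$-linear extension of $L$ in Step 2 is well defined and remains nonnegative on each $p^\dagger p$, which follows from $L \ge 0$ on $\ccone$, and that $\cQ_1$ really does force pairwise commutation of the operators of distinct players (it does, since the commutators $[X_{s_j}^{a_j}, X_{s_{j'}}^{a_{j'}}]$ with $j \ne j'$ occur among the polynomials $i[X_{s_j}^{a_j}, O_{-j}]$).
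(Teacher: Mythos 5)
Your architecture — Hahn--Banach to produce a separating functional, then a GNS construction to turn that functional into a representation and a state — is the same as the paper's, and Step 2 matches the appendix closely, including the observation that $\pm p \in \cP$ forces $\pi(p)=0$ by density and polarization, and the boundedness estimate via $\id - X_{s_j}^{a_j}\in\ccone$. The deviation, and the gap, is entirely in Step 1. You assert that $\ccone$ is closed in the order-unit seminorm and then invoke a \emph{strict} separation to get $L(q_\nu)<0$. But what you actually establish with the telescoping trick using $\cQ_2,\cQ_3$ (the paper's Lemmas~\ref{lemma:5} and~\ref{lemma:4}) is only that $\id$ is an \emph{order unit}: every Hermitian polynomial $p$ can be written $s - t\,\id$ with $s\in\ccone$, $t\ge 0$. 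That is not the Archimedean property, and it does not by itself imply closedness of the cone in the order-unit topology — the implication runs the other way. So the step you flag as ``the technical heart'' is genuinely unproved as stated. It is also unnecessary. The paper's Lemma~\ref{lemma:7} sidesteps closedness entirely: define the Minkowski-type sublinear functional $\mu(p)=\inf\{t>0: p=s-t\,\id,\ s\in\ccone\}$ on Hermitian polynomials (well-defined thanks to the order-unit property), verify that the tautological functional $f(t(q_\nu-\id)):=t$ is dominated by $\mu$ on $\mathrm{span}(q_\nu-\id)$ because $\mu(q_\nu-\id)\ge 1$ when $q_\nu\notin\ccone$, and apply the Hahn--Banach \emph{extension} theorem (not a separation theorem). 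This yields a functional $\lambda$ with $\lambda\ge 0$ on $\ccone$, $\lambda(\id)>0$, and only the \emph{weak} inequality $\lambda(q_\nu)\le 0$ — which is all one needs at the end: $[\id]$ is a nonzero vector since $\lambda(\id)>0$, and $\bigbra{[\id]}\pi(q_\nu)\bigket{[\id]}=\lambda(q_\nu)\le 0$ already contradicts $q_\nu>0$. So to repair your proposal, drop the closedness claim and the strict inequality and run the extension-theorem version; the rest of your argument then goes through exactly as in the paper.
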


We now prove the {\em contrapositive} statement of
Theorem~\ref{theorem:satz}. In particular, we show that if $q_\nu$
has \emph{no} representation as a WSOS, then $q_\nu\not>0$ and there
exist operators and a state on some Hilbert space that achieve
winning probability $\nu$. The proof proceeds in two stages. We first
use the Hahn-Banach theorem to show (nonconstructively) the existence
of a linear function that separates $q_\nu$ from the convex cone
$\ccone$ and then use a GNS construction, as described below.
Unfortunately we will not in general end up with operators on a
finite-dimensional Hilbert space.

We start by establishing some simple facts about $\ccone$.
\begin{lemma}
  \label{lemma:5}
    Let $W$ be the product of some number of variables from the set
    $\{X_{s_j}^{a_j}\}$. Then $\id-W^\dagger W \in \ccone$,
    and $\id - W W^\dagger \in \ccone$.
\end{lemma}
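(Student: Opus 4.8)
The plan is to prove the statement by induction on the number of letters in the word $W$, peeling off one letter at a time. Before starting I would record three closure properties of $\ccone$ that are immediate from Definition~\ref{definition:3}: (i) $\ccone$ is closed under addition; (ii) $\ccone$ contains every sum of squares, in particular $0$, $\id$, and each generator $X_{s}^{a}=(\hat X_{s}^{a})^{\dagger}\hat X_{s}^{a}$; and (iii) $\ccone$ is closed under $q\mapsto T^{\dagger}qT$ for every polynomial $T$, since $T^{\dagger}r_{i}^{\dagger}r_{i}T=(r_{i}T)^{\dagger}(r_{i}T)$ and $T^{\dagger}s_{ij}^{\dagger}p_{i}s_{ij}T=(s_{ij}T)^{\dagger}p_{i}(s_{ij}T)$. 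Moreover, because $\cP=\cQ\cup(-\cQ)$, both $\pm\bigl(\id-\sum_{a}X_{s}^{a}\bigr)$ and $\pm\bigl((X_{s}^{a})^{2}-X_{s}^{a}\bigr)$ belong to $\cP$, hence to $\ccone$ (taking multiplier $\id$). Throughout, each $X_{s}^{a}$ is treated as Hermitian, following the convention of Section~\ref{section:satz}.

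The single-letter case $\id-(X_{s}^{a})^{2}\in\ccone$ is the key ingredient, and I would get it from the identity
\begin{equation*}
\id-(X_{s}^{a})^{2}=\Bigl(\id-\sum_{a'}X_{s}^{a'}\Bigr)+\bigl(X_{s}^{a}-(X_{s}^{a})^{2}\bigr)+\sum_{a'\neq a}X_{s}^{a'}.
\end{equation*}
The first summand lies in $\cP\subseteq\ccone$; the second is the negative of a generator, hence also in $\ccone$; and the third is a sum of squares, since each $X_{s}^{a'}=(\hat X_{s}^{a'})^{\dagger}\hat X_{s}^{a'}$. By closure under addition their sum is in $\ccone$.

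For the inductive step, write $W=X_{s_{1}}^{a_{1}}\cdots X_{s_{k}}^{a_{k}}$ and let $W_{j}:=X_{s_{j}}^{a_{j}}\cdots X_{s_{k}}^{a_{k}}$ be the suffix of length $k-j+1$, with $W_{k+1}:=\id$, so that $W_{1}=W$ and $W_{j}=X_{s_{j}}^{a_{j}}W_{j+1}$. Using Hermiticity of $X_{s_{j}}^{a_{j}}$ one has $W_{j}^{\dagger}W_{j}=W_{j+1}^{\dagger}(X_{s_{j}}^{a_{j}})^{2}W_{j+1}$, which yields the telescoping identity
\begin{equation*}
\id-W_{j}^{\dagger}W_{j}=\bigl(\id-W_{j+1}^{\dagger}W_{j+1}\bigr)+W_{j+1}^{\dagger}\bigl(\id-(X_{s_{j}}^{a_{j}})^{2}\bigr)W_{j+1}.
\end{equation*}
Starting from $\id-W_{k+1}^{\dagger}W_{k+1}=0\in\ccone$ and descending $j=k,k-1,\dots,1$: at each step the first summand is in $\ccone$ by the inductive hypothesis, and the second is in $\ccone$ by closure property (iii) applied to the single-letter fact above with $T=W_{j+1}$. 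This gives $\id-W^{\dagger}W=\id-W_{1}^{\dagger}W_{1}\in\ccone$. Finally, $W^{\dagger}$ is again a product of variables from $\{X_{s_{j}}^{a_{j}}\}$ (the reversed word), so applying the result to $W^{\dagger}$ in place of $W$ gives $\id-WW^{\dagger}=\id-(W^{\dagger})^{\dagger}(W^{\dagger})\in\ccone$.

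I do not expect a genuine obstacle here. The only point that needs care is that one cannot simply strip a letter off $W^{\dagger}W$: the peeling identity produces an extra term $W_{j+1}^{\dagger}\bigl(\id-(X_{s_{j}}^{a_{j}})^{2}\bigr)W_{j+1}$, and $\id-(X_{s}^{a})^{2}$ is itself not one of the generators of $\ccone$ but has to be assembled from the measurement-completeness generator, the projector generator, and a sum of squares, as in the second paragraph.
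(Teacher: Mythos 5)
Your proof is correct and follows essentially the same approach as the paper: induction on the word length, with the base case $\id-(X_s^a)^2\in\ccone$ obtained by combining the $\cQ_2$ normalization constraint, the $\cQ_3$ projector constraint, and a sum of squares, and the inductive step stripping one letter at a time via closure of $\ccone$ under $q\mapsto T^\dagger q T$. The only cosmetic differences are that you peel from the left rather than the right, write the SOS part of the base case as $\sum_{a'\neq a}X_s^{a'}$ instead of $\sum_{a'\neq a}(X_s^{a'})^2$, and obtain $\id-WW^\dagger\in\ccone$ by applying the $\id-W^\dagger W$ result to the reversed word rather than rerunning the argument.
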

\begin{proof}
    The proof is by induction on $n$, the number of variables in the
    product $W$. For $n = 1$, we have to show that for all $j,a_j,s_j$, $\id -
    (X_{s_j}^{a_j})^2 \in \ccone$.
    We do this for $\id - (\xoq)^2$. Writing
  \begin{align}
    \id - (\xoq)^2 =  \sum_{a'_j \neq a_j} (X^{a'_j}_{s_j})^2
    -\sum_{a'_j}\left[ (X^{a'_j}_{s_j})^2 - X^{a'_j}_{s_j}\right] + \Bigl[\id - \sum_{a'_j} X^{a'_j}_{s_j}\Bigr],
  \end{align}
makes it clear that $\id - (\xoq)^2 \in \ccone$. For $n\geq 1$, write
$W = V \xoq $, where we have assumed without loss of generality that
the element $\xoq$ is rightmost in $W$, and where $V$ is the product
of $n-1$ variables. Then
  \begin{align}
    \id - W^\dagger W = \id - \xoq V^\dagger V \xoq  = \id - (\xoq)^2 + \xoq (\id -V^\dagger V) \xoq.
  \end{align}
Now $\id - (\xoq)^2 \in \ccone$ by the result for $n=1$ and $\id -
V^\dagger V \in \ccone$ by the inductive hypothesis. Moreover, for
any polynomial $r\in\ccone$, and any arbitrary polynomial $s$, it is
easy to see that $s^\dagger\,r\,s\in\ccone$. Hence, this implies that
$\id -W^\dagger W \in \ccone$. The argument for $\id - WW^\dagger$ is
analogous.
\end{proof}

\begin{lemma}
  \label{lemma:4}
    Let $p$ be a Hermitian polynomial. Then there exists a real number $t
    \geq 0$ and an $s \in \ccone$ such that $p= s-t\,\id$.
\end{lemma}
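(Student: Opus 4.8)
The plan is to prove the equivalent statement that there is a real $t \ge 0$ with $p + t\,\id \in \ccone$; setting $s := p + t\,\id$ then yields the claimed decomposition $p = s - t\,\id$ with $s\in\ccone$. Everything rests on two closure properties of $\ccone$ that are immediate from Definition~\ref{definition:3}: $\ccone$ contains $r^\dagger r$ for \emph{every} polynomial $r$, and $\ccone$ is closed under addition and under multiplication by a nonnegative real scalar (for the latter, absorb a factor $\sqrt{\lambda}$ into the $r_i$ and $s_{ij}$). Combined with Lemma~\ref{lemma:5}, which supplies $\id - W^\dagger W \in \ccone$ for any product $W$ of the variables, these are the only ingredients needed.

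The core is a ``completion of the square'' identity. For a monomial $W$ in the variables and a real parameter $\lambda$, expanding $(\lambda\,\id + W)^\dagger(\lambda\,\id + W)$ and using $W^\dagger W = \id - (\id - W^\dagger W)$ gives
\begin{equation*}
  \lambda\,(W + W^\dagger) = (\lambda\,\id + W)^\dagger(\lambda\,\id + W) + (\id - W^\dagger W) - (\lambda^2 + 1)\,\id .
\end{equation*}
The first term on the right is an SOS and the second lies in $\ccone$ by Lemma~\ref{lemma:5}, so $\lambda\,(W + W^\dagger) + (\lambda^2 + 1)\,\id \in \ccone$ for every real $\lambda$ (note this handles $\lambda$ of either sign at once). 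The analogous expansion of $(\lambda\,\id + iW)^\dagger(\lambda\,\id + iW)$ gives $\lambda\, i(W - W^\dagger) + (\lambda^2 + 1)\,\id \in \ccone$.

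Finally I would decompose $p$. Writing $p$ in the basis of monomials and imposing $p = p^\dagger$ shows that $p$ is a \emph{real} linear combination $p = \sum_j \beta_j G_j$, where each $G_j$ has the form $W_j + W_j^\dagger$ or $i(W_j - W_j^\dagger)$ for some monomial $W_j$ (the coefficient of the empty word, which is real by Hermiticity, contributes a term of the first type with $W_j = \id$). Applying the two identities above with $\lambda = \beta_j$ to each summand and adding, using that $\ccone$ is a convex cone, yields $p + t\,\id \in \ccone$ with $t := \sum_j (\beta_j^2 + 1) \ge 0$, as desired. I do not anticipate a genuine obstacle here: the only points requiring a little care are the bookkeeping in the Hermitian monomial decomposition and checking that $\ccone$ really is a convex cone closed under the operations used; the mathematical content is entirely in the one-line square-completion identity above.
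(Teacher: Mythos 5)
Your proof is correct and essentially the same argument as the paper's: decompose the Hermitian polynomial $p$ into Hermitian monomial-pair terms and absorb each into $\ccone$ by completing a square and invoking Lemma~\ref{lemma:5} ($\id - W^\dagger W \in \ccone$). The paper batches each pair $w^*v\,W^\dagger V + w v^*\,V^\dagger W$ into a single square $(v^*V^\dagger + w^*W^\dagger)(vV + wW)$ with $t$-contribution $|v|^2 + |w|^2$, whereas you first pass to the real basis $\{W + W^\dagger,\ i(W - W^\dagger)\}$ and use two squares per pair; the difference is bookkeeping, not method.
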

\begin{proof}
The polynomial $p$ is a finite sum of terms of the form $p' =  w^* v
W^\dagger V + w v^* V^\dagger W$, where $V,W$ are products of the
variables, $w,v \in \CC$ and $w^*$ is the complex conjugate of $w$
(likewise for $v^*$). If we can show the result for $p'$, then the
result for general polynomials $p$ follows immediately. To this end,
note that we can write
\begin{align}
  p' = (v^* V^\dagger +  w^* W^\dagger)(v V + w W) -|v|^2 V^\dagger V - |w|^2 W^\dagger W
\end{align}
so that
\begin{align}
  (|v|^2 + |w|^2) \id + p' = (v^* V^\dagger +  w^* W^\dagger)(v V + w W)
  +|v|^2(\id - V^\dagger V) + |w|^2(\id - W^\dagger W)
\end{align}
which is in $\ccone$ by Lemma~\ref{lemma:5}. Taking $t = |v|^2 + |w|^2$,
the result follows for $p'$, which in turn implies the result for
general polynomials $p$.
\end{proof}

We now want to show that if $q_\nu$ is a Hermitian polynomial that
does not lie in $\ccone$, then there exists a linear functional that
separates it from $\ccone$. The following Lemma closely
follows~\cite[Proposition 3.3]{helton04:_posit_for_non_commut_polyn}
where the only difference is that we consider polynomials over
\emph{complex} instead of real Hermitian matrices. Fortunately, the
essential ingredient of
the proof, the Hahn-Banach theorem
also holds in this case. We state entire proof for convenience:

\begin{lemma}
  \label{lemma:7}
Let $M$ be the space of Hermitian polynomials over complex matrices.
Let $q$ be a Hermitian polynomial such that $q \not \in \ccone$.
Then there exists a linear functional $\lambda: M \rightarrow \RR$
such that $\lambda(\ccone)\geq 0$, $\lambda(\id) >0$, $\lambda (q) \leq 0$.
\end{lemma}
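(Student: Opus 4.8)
The plan is to deduce this from the Hahn--Banach separation theorem, following closely~\cite[Proposition~3.3]{helton04:_posit_for_non_commut_polyn}; the only change from that reference is that we work with Hermitian polynomials over $\CC$ rather than symmetric ones over $\RR$, and this is harmless because the set $M$ of Hermitian polynomials over complex matrices is a real vector space (multiplication by $i$ destroys Hermiticity) and $\ccone$ is a convex cone inside it. I would first record the cone axioms from Definition~\ref{definition:3}: $\ccone$ is closed under addition (concatenate the two WSOS representations) and under scaling by a nonnegative real $\mu$ (absorb $\sqrt\mu$ into each $r_i$ and each $s_{ij}$), and it contains $0$ (the empty sum) and $\id$ (take $r_1=\id$, all other terms absent). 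Since $q\notin\ccone$, it then suffices to produce a nonzero real-linear functional $\lambda:M\to\RR$ with $\lambda(c)\ge\lambda(q)$ for all $c\in\ccone$, from which the three stated properties will be extracted.

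The main obstacle is that $M$ is infinite-dimensional and $\ccone$ need not be closed in any natural topology, so a topological separation theorem cannot be applied directly. Instead I would invoke the purely algebraic form of Hahn--Banach, which separates a convex set having an algebraic interior (core) point from any disjoint convex set. The crucial step is therefore to check that $\id$ is a core point of $\ccone$, i.e.\ that for each $p\in M$ there is $\delta>0$ with $\id+tp\in\ccone$ whenever $|t|<\delta$. This is precisely where Lemma~\ref{lemma:4} enters: writing $p=s_+-t_+\id$ and $-p=s_--t_-\id$ with $s_\pm\in\ccone$ and $t_\pm\ge0$, one gets for all sufficiently small $\epsilon>0$ that $\id+\epsilon p=\epsilon s_++(1-\epsilon t_+)\id\in\ccone$ and $\id-\epsilon p=\epsilon s_-+(1-\epsilon t_-)\id\in\ccone$, since $\ccone$ is a convex cone containing $\id$ and the $s_\pm$. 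Hence $\id\in\operatorname{core}(\ccone)$.

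With $\ccone$ and $\{q\}$ disjoint and convex and $\id$ a core point of $\ccone$, algebraic Hahn--Banach provides a nonzero $\lambda:M\to\RR$ and a constant $\alpha\in\RR$ with $\lambda(q)\le\alpha\le\lambda(c)$ for every $c\in\ccone$ (flipping the sign of $\lambda$ if needed). I would then extract the three conclusions as follows. Nonnegativity on $\ccone$: if $\lambda(c_0)<0$ for some $c_0\in\ccone$, then $nc_0\in\ccone$ forces $\alpha\le n\lambda(c_0)\to-\infty$, which is impossible; so $\lambda(\ccone)\ge0$, and evaluating at $c=0$ gives $\alpha\le0$, whence $\lambda(q)\le\alpha\le0$. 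Strict positivity at $\id$: from $\id\in\ccone$ we already have $\lambda(\id)\ge0$, and if $\lambda(\id)=0$ then, using that $\id$ is a core point, $\id\pm\epsilon p\in\ccone$ for small $\epsilon>0$ yields $0\le\lambda(\id\pm\epsilon p)=\pm\epsilon\,\lambda(p)$, so $\lambda(p)=0$ for all $p\in M$ and $\lambda\equiv0$, contradicting the separation theorem; hence $\lambda(\id)>0$. The genuinely nontrivial ingredients are thus the algebraic Hahn--Banach theorem and the identification of $\id$ as a core point of $\ccone$ via Lemma~\ref{lemma:4}; the remainder is bookkeeping with the cone structure.
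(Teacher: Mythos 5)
Your proof is correct, and it reaches the conclusion via a route that differs in form from the paper's, though both ultimately hinge on the same crucial input from Lemma~\ref{lemma:4}. The paper's argument uses the \emph{extension} (dominated-extension) form of Hahn--Banach: it defines a Minkowski-type sublinear functional $\mu(p)=\inf\{t>0: p=s-t\id,\ s\in\ccone\}$, observes that Lemma~\ref{lemma:4} guarantees $\mu$ is finite on all of $M$, defines a linear functional on the one-dimensional subspace $\RR\cdot(q-\id)$ dominated by $\mu$, and extends it to all of $M$. You instead use the \emph{geometric separation} form (Eidelheit's theorem for sets with nonempty core), after establishing --- again from Lemma~\ref{lemma:4}, this time read as saying every Hermitian polynomial can be moved into $\ccone$ by adding a multiple of $\id$ --- that $\id$ is an algebraic interior point of $\ccone$. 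These are classically equivalent incarnations of Hahn--Banach: the finiteness of the paper's $\mu$ is precisely the statement that $\id$ lies in the core of $\ccone$, and the usual proof of the separation form passes through a Minkowski gauge. Your version is arguably more transparent conceptually (the geometric picture of separating a point from a cone with interior is immediate), and it makes the role of Lemma~\ref{lemma:4} --- as the fact that equips the cone with a core point --- especially explicit; the paper's version is closer to Helton and McCullough's original presentation. The subsequent bookkeeping (scaling within the cone to kill the constant $\alpha$, and using the core property to rule out $\lambda(\id)=0$) is sound and corresponds, in the separation language, to what the paper does with the inequality $F\le\mu$.
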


\begin{proof}
Let $M$ be the space of Hermitian polynomials over complex matrices.
Let $\mu: M \rightarrow \RR$ be a linear functional defined as
$\mu(p) := \inf\{t>0 : p = s-t\id \text{ for some }s\in \ccone\}$.
Note that by Lemma~\ref{lemma:4}, we can express any $p \in M$ in
this form. Clearly, $\mu$ is a seminorm on $M$. Note that for $q
\not\in \ccone$ we have $\mu(q-\id) \geq 1$ by definition. We
consider now a fixed $q\not\in \ccone$  and let $L$ be the span of
$q-\id$ in $M$, i.e., all Hermitian polynomials $t(q-\id)$ with $t
\in \RR$. Define a linear functional $f: L \to \RR$ so that
$f(t(q-\id)) := t$. It is not hard to see that $f\leq \mu$ on $L$.
Now we make the first nonconstructive step. By the Hahn-Banach
theorem \cite[Theorem~3.3]{rudin73:_funct_analy}, $f$ extends to a
linear functional $F: M \to \RR$ such that $F(p) \leq \mu(p)$ for all
$p \in M$.

We now claim that $\lambda = -F$ satisfies the requirements of the
lemma: First of all, note that we have for all $s \in \ccone$ that
$F(s) - F(q) + 1 = F(s - q + (q-\id)) = F(s - \id) \leq \mu(s - \id)
\leq 1$, where the first equality follows from the linearity of $F$,
and the first inequality follows from $F$ being an extension of $f$.
Hence, $F(s) \leq F(q)$. Clearly, we also have that for all $s \in
\ccone$ and $t
> 0$, $ts \in \ccone$ and hence $tF(s) = F(ts) \leq F(q)$. Thus, if
$s \in \ccone$ then $F(s) \leq 0$ and $F(q) \geq 0$. Hence, $\lambda
= - F$ satisfies $\lambda(\ccone)\geq 0$ and $\lambda(q)\leq 0$ as
required.

It remains to show that $\lambda(\id) > 0$. First of all, note that
$\id \in \ccone$. Suppose that on the contrary we have $\lambda(\id)
= 0$. Let $p \in \ccone$ and note that by Lemma~\ref{lemma:4} we may
write $-p = s - t \id$ for some $t > 0$ and $s \in \ccone$. From $t
\id = s + p$, we have $0 = t \lambda(\id) = \lambda(t \id) =
\lambda(s) + \lambda(p) \geq 0$ and hence $\lambda(s) = \lambda(p) =
0$ for all $p \in \ccone$. Now note that since $\id - q \in M$ we may
write $\id - q = r - s$ for some $r,s \in \ccone$. But then $0 =
\lambda(r) - \lambda(s) = \lambda(\id - q) = 1$ which is a
contradiction.
\end{proof}

The remainder of the proof of Theorem~5.2 is now exactly
identical to~\cite{helton04:_posit_for_non_commut_polyn}, which in itself
is analogous to the famous GNS construction~\cite{takesaki,robinson} that allows us to
find a representation in terms of bounded operators on a Hilbert
space. We here provide a slightly annotated version of this approach in the
hope that it will be more accessible to the present audience.

\begin{theorem}[Helton and McCullough]\label{helton}
Let $M$ be the space of Hermitian polynomials. Let $\lambda: M
\rightarrow \RR$ be a linear functional such that
$\lambda(\ccone)\geq 0$ and $\lambda(\id) >0$. Then there exists a
Hilbert space $\hil$, bounded operators $\{\txoq\}$
on $\hil$,  and a state $\gamma \in \hil$ such that for all $p \in
\mP$ we have $p(\{\txoq\})\geq 0$ and for any
Hermitian $q \in M$,
$$
\bra{\gamma}q(\{\txoq\})\ket{\gamma} = \lambda(q).
$$
\end{theorem}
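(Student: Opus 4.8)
The plan is to run the Gelfand--Naimark--Segal (GNS) construction in our concrete setting, exactly as in~\cite{helton04:_posit_for_non_commut_polyn}, using the linear functional produced in Lemma~\ref{lemma:7}. Write $\cV$ for the free complex $*$-algebra generated by the variables $\{\txoq\}$ and their formal adjoints, so that $M$ is its real subspace of Hermitian elements. First I would extend $\lambda$ to a $\CC$-linear functional on all of $\cV$ by decomposing an arbitrary $r\in\cV$ as $r = \frac12(r+r^\dagger) + i\cdot\frac1{2i}(r-r^\dagger)$ into Hermitian parts; the extension then satisfies $\lambda(r^\dagger)=\overline{\lambda(r)}$. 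Since rescaling $\lambda$ by the positive constant $\lambda(\id)^{-1}$ affects none of the hypotheses, I may also assume $\lambda(\id)=1$.

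Next I would introduce the sesquilinear form $\bk{p}{q}:=\lambda(p^\dagger q)$ on $\cV$. It is positive semidefinite because $p^\dagger p$ is a single square and hence lies in $\ccone$, so $\lambda(p^\dagger p)\ge 0$. Let $\cN:=\{p:\bk{p}{p}=0\}$; by Cauchy--Schwarz this is a subspace, and the form descends to a genuine inner product on $\cV/\cN$. The crucial structural point is that $\cN$ is a \emph{left ideal} of $\cV$: for any word $W$ in the variables and their adjoints, Lemma~\ref{lemma:5} yields $\id - W^\dagger W\in\ccone$, hence $p^\dagger(\id-W^\dagger W)p\in\ccone$ and therefore $\lambda(p^\dagger W^\dagger W p)\le\lambda(p^\dagger p)$; for $p\in\cN$ the right-hand side is zero while the left-hand side is nonnegative, forcing $Wp\in\cN$, and linearity upgrades this to $vp\in\cN$ for every $v\in\cV$. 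Let $\hil$ be the completion of $\cV/\cN$ and put $\ket{\gamma}:=[\id]$, a unit vector because $\|\gamma\|^2=\lambda(\id)=1$.

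Then I would represent each variable by left multiplication, $\txoq\cdot[p]:=[\txoq\,p]$, which is well defined precisely because $\cN$ is a left ideal. The same inequality $\lambda\bigl(p^\dagger(\txoq)^\dagger\txoq\,p\bigr)\le\lambda(p^\dagger p)$ shows $\|\txoq\|\le 1$, so each of these operators extends to a bounded operator on $\hil$; moreover a direct check shows that the Hilbert-space adjoint of left multiplication by $\txoq$ is left multiplication by $(\txoq)^\dagger$, so the formal $*$-structure is faithfully represented and, in particular, $q(\{\txoq\})$ is self-adjoint for every Hermitian $q$. An easy induction on degree gives $q(\{\txoq\})\ket{\gamma}=[q]$ for every polynomial $q$, so $\bra{\gamma}q(\{\txoq\})\ket{\gamma}=\bk{\id}{q}=\lambda(q)$. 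Finally, for $p\in\cP$ and any $[v]$ in the dense subspace $\cV/\cN$ we get $\bk{[v]}{\,p(\{\txoq\})[v]}=\lambda(v^\dagger p\,v)\ge 0$ since $v^\dagger p\,v\in\ccone$ by the very definition of the cone; density together with boundedness of $p(\{\txoq\})$ then gives $p(\{\txoq\})\ge 0$, completing the proof.

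The genuinely creative step --- separating $q_\nu$ from $\ccone$ --- was already carried out (nonconstructively, via Hahn--Banach) in Lemma~\ref{lemma:7}, so the real content here is the essentially routine GNS bookkeeping. The only places that use problem-specific structure are the left-ideal property of $\cN$ and the operator-norm bound, both of which reduce to the inclusions $\id - W^\dagger W\in\ccone$ (Lemma~\ref{lemma:5}) and $v^\dagger p\,v\in\ccone$ (definition of $\ccone$). The point demanding most care is keeping the implicit expansion $X_{s_j}^{a_j}=(\txoq)^\dagger\txoq$ straight and not conflating the formal adjoint of a polynomial with the Hilbert-space adjoint of the operator it represents; and one should bear in mind that, because of the Hahn--Banach step inherited from Lemma~\ref{lemma:7}, the Hilbert space $\hil$ obtained here will in general be infinite-dimensional.
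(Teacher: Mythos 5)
Your proof is correct and follows exactly the same GNS strategy as the paper's own proof: form a (pre-)inner product from $\lambda$, quotient by the null space $\cJ$ (verified to be a left ideal via Lemma~\ref{lemma:5}), represent the variables by left multiplication, take $\gamma = [\id]$, and read off both the expectation values $\bra{\gamma}q\ket{\gamma}=\lambda(q)$ and the positivity $p(\{\txoq\})\geq 0$ from the inclusion $v^\dagger p\,v\in\ccone$. Your version is, if anything, slightly tidier than the paper's, since you explicitly extend $\lambda$ to a $\CC$-linear functional on the full free $*$-algebra $\cV$ with $\lambda(r^\dagger)=\overline{\lambda(r)}$ before running GNS, whereas the paper works with the real bilinear form $\tfrac{1}{2}\lambda(s^\dagger t + t^\dagger s)$ on the real vector space $M$ of Hermitian polynomials and is somewhat casual about where the complex scalar structure and the non-Hermitian products $xy$ enter the quotient construction.
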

\begin{proof}
First, we construct a Hilbert space $\hil$ from $M$: For $s,t \in M$,
define
$$
\inp{s}{t} = \frac{1}{2} \lambda(s^\dagger t + t^\dagger s).
$$
It is easy to verify that $\inp{s}{t}$ is symmetric, bilinear and is
also positive semidefinite whenever $s=t$, since $s^\dagger s \in
\ccone$ and hence $\inp{s}{s} = \lambda(s^\dagger s) \geq 0$. Note
that $\inp{\cdot}{\cdot}$ is degenerate, but in order to obtain a
Hilbert space, we need to turn $\inp{\cdot}{\cdot}$ into an inner
product. This can be done in the standard way by 'moding out' the
degeneracy: Consider
$$
\cJ = \{s \in M \mid \inp{s}{s} = 0\}.
$$
It is not difficult to verify that $\cJ$ forms a linear subspace of
$M$ and that $\cJ$ is a left ideal of $M$
We now consider the quotient space
$M/\cJ$, which is the vector space created by the equivalence classes
$$
[x] = \{x + j \mid j \in \cJ\}.
$$
Addition and scalar multiplication are defined via the following
operations inherited from $M$: for $x,y \in M$ and $\alpha \in
\Complex$, we have $[x+y] = [x] + [y]$ and $[\alpha x] = \alpha [x]$.
We can now define the inner product
$$
\inp{[x]}{[y]} = \inp{x}{y}.
$$
It is important to note that this inner product does not depend on
our choice of representative from each equivalence class, and we have
eliminated the degeneracy present earlier. The Hilbert space is now
obtained by forming the completion of $M/\cJ$ with respect to this
inner product.

Second, we now need to show that there exists a representation $\pi: M \rightarrow \bop(\hil)$.
We first define the action of $\pi(x)$ with $x \in M$ on the vectors $[y]$ as
$$
X[y] = [xy],
$$
where we use the shorthand $X = \pi(x)$. It is straightforward to
verify that this definition is again independent of our choice of
representative from each equivalence class, and that $\pi$ is a
homomorphism. For simplicity, we only show boundedness for operators
$\{\xoq\} \in M$. To see that $X=\pi(x)$ for $x \in \{\xoq\}$ is
bounded, note that by Lemma~\ref{lemma:5} we have $\id - x^\dagger x
\in \ccone$ and that
$$
\inp{X[s]}{X[s]} = \inp{xs}{xs} = \inp{s}{s} - \lambda(s^\dagger(\id - x^\dagger x)s),
$$
where $\lambda(s^\dagger (\id - x^\dagger x)s) \geq 0$, since
$s^\dagger (\id - x^\dagger x)s \in \ccone$. From Lemma~\ref{lemma:5}
we also have that $\id - x x^\dagger \in \ccone$, and hence the
argument for $\pi(x^\dagger)$ is analogous and we may write
$X^\dagger = \pi(x^\dagger)$ without ambiguity. Hence we can find
claimed operators $\{\txoq\} \in \bop(\hil)$.

Third, we need to define the vector $\gamma$. Since $\id \in M$ we choose
$\gamma = [\id]$. Hence, $\inp{\gamma}{\gamma} = \lambda(\id) > 0$, and thus
$\gamma$ is non-zero. Let $q \in M$ and write $q(X)$ for the polynomial
where variables $x_j$ have been substituted by their representations
$X_j$. Note that
$$
\inp{q(X)\gamma}{\gamma} = \lambda(q),
$$
where we have used the fact that $q$ is a Hermitian polynomial.
By a similar argument, it follows that for $p \in \ccone$ and $r \in M$ we have
$$
\inp{p(X)[r]}{[r]} = \lambda(r^\dagger p r) \geq 0,
$$
since $r^\dagger p r \in \ccone$ and hence $p(X) \geq 0$ as promised.
\end{proof}

We can now complete the proof of Theorem~\ref{theorem:satz}.
\begin{proof}[Proof of Theorem~\ref{theorem:satz}]
Recall that our goal was to prove the contrapositive: If $q_\nu\notin
\ccone$ then $q_\nu\not > 0$. From Lemma~\ref{lemma:7} we have that
if $q_\nu \notin \ccone$, then there exists a linear functional
$\lambda$ that separates $q_\nu$ from $\ccone$. Lemma~\ref{helton}
gives us that there exists a Hilbert space $\hil$, measurement
operators $\{\tilde\xoq\}$, and a vector $\gamma$
such that
$$
\lambda(q_\nu) = \bra{\gamma}q_\nu(\{\tilde\xoq\})\ket{\gamma} \leq 0.
$$
and the operators satisfy all the constraints, i.e., for all $p \in
\mP$ we have $p(\{\tilde\xoq\}) \geq 0$. (Note that
we only have equality constraints, which we implemented by including
both $p$ and $-p$ in $\cP$.) Since $\gamma$ is not zero, we have
$q_\nu \not > 0$ which completes the proof.
\end{proof}

Unfortunately, Theorem~\ref{helton} does not tell us whether the
underlying Hilbert space $\hil$ is finite-dimensional, or whether the
algebra generated by the operators $\{\xoq\}$ is type-I at
all. Hence, we cannot ensure without further proof that the fact that
our measurement operators do satisfy the commutation constraints
necessarily leads to them having tensor product form. Thus, we do not
know whether there exist games $G$ for which $\omega^*(G) < \ftv(G)$:
for such games we may would have to get a type-II or type-III
algebra.

\section{Tool 3: Semidefinite Programming}

A semidefinite program (SDP) is an optimization over Hermitian
matrices~\cite{L.Vandenberghe:SR:1996}. The objective function
depends linearly on the matrix variable and the optimization is
carried out subjected to the matrix variable being positive
semidefinite and satisfies various affine constraints. Any
semidefinite program may be written in the following {\it standard
form}~\cite{Boyd:04}:
\begin{subequations}\label{Eq:SDP:Matrix}
\begin{align}
    &\text{maximize\ \ } -\Tr\left[ G_0 Z\right], \label{Eq:SDP:Matrix:Obj}\\
    &\text{subject to\ } \quad\,\Tr\left[ G_k Z\right] =b_k \quad \forall~k, \label{Eq:SDP:Matrix:Eq}\\
    &\qquad\qquad\qquad\quad Z\ge 0, \label{Eq:SDP:Matrix:Ineq}
\end{align}
\end{subequations}
where $G_0$ and all the $G_k$'s are Hermitian matrices, and the $b_k$
are real numbers that together specify the optimization; $Z$ is the
Hermitian matrix variable to be optimized.

An SDP also arises in the {\em inequality form}, which seeks  to
minimize a linear function of the optimization variables
$(x_1,x_2,\ldots,x_n)\in\mathbb{R}^n$, subjected to a linear matrix
inequality:
\begin{subequations}\label{Eq:SDP:Vector}
\begin{gather}
\text{minimize \ \ \ \ }\quad b_k'\,x_k\qquad\label{Eq:SDP:Vector:Obj}\\
\text{subject to \ \ }F_0+\sum_k x_k F_k \geq
0.\label{Eq:SDP:Vector:Ineq}
\end{gather}\end{subequations}
As in the standard form, $F_0$ and all the $F_k$'s are Hermitian
matrices, while $(b_1', b_2',\ldots,b_n')$ is a real vector of length
$n$.

\section{Some Other Miscellaneous Details}

\subsection{Implementing Lowest Level SDP Relaxations for $I_{3322}$}
\label{Sec:SDP:Lowest}

Here, we will provide the explicit form for the matrices $F_k$ and
constants $b_k'$ that define the SDP used in the lowest level
relaxation for finding an upper bound on $\omega^*(I_{3322})$. Note
that as with the CHSH case, in the lowest level relaxation, we shall
choose $s_{ij}$ in Eq.~\eqref{Eq:qSOS} as multiples of $\id$. To this
end, we will write the SDP in the inequality form as
\begin{subequations}
\begin{gather}
    \text{minimize \ \ \ \ }\quad b_\nu'\,\nu+\sum_{k=1}^6
    b_k'\,x_k\label{Eq:SDP:1stLevel:Obj}\\
    \text{subject to \ \ }\Gamma=F_0+\nu F_\nu+\sum_{k=1}^6 x_k F_k \geq 0.
    \label{Eq:SDP:1stLevel:Cst}
\end{gather}\end{subequations}
In particular, we will set
\begin{gather*}
    F_0=\frac{1}{2}\left(\begin{array}{rrrrrrr}
    0 & 1 & 0 & 0 & 2 & 1 & 0\\
    1 & 0 & 0 & 0 & -1 & -1 & -1\\
    0 & 0 & 0 & 0 & -1& -1& 1\\
    0 & 0 & 0 & 0& -1& 1& 0\\
    2 & -1& -1& -1& 0 & 0 & 0\\
    1 & -1& -1& 1& 0 & 0 & 0\\
    0 & -1& 1& 0 & 0 & 0 & 0\\
    \end{array}\right),\quad
    F_\nu=\left(\begin{array}{rrrrrrr}
    1 & 0 & 0 & 0 & 0 & 0 & 0\\
    0 & 0 & 0 & 0 & 0 & 0 & 0\\
    0 & 0 & 0 & 0 & 0 & 0 & 0\\
    0 & 0 & 0 & 0 & 0 & 0 & 0\\
    0 & 0 & 0 & 0 & 0 & 0 & 0\\
    0 & 0 & 0 & 0 & 0 & 0 & 0\\
    0 & 0 & 0 & 0 & 0 & 0 & 0\\
    \end{array}\right),
\end{gather*}
and for $k=1,2,\ldots,6$, we shall choose $F_k$ such that its only
nonzero entries are $[F_k]_{1,k+1}=[F_k]_{k+1,1}=1/2$ and
$[F_k]_{k+1,k+1}=-1$. Correspondingly, $b_k'$ is chosen such that
$b_\gamma'=1$ and $b_k'=0$ for $k=1,2,\ldots,6$. With this choice of
$F_k$, $b_k'$ and with $z$ given by Eq.~\eqref{Eq:z:1st:3322}, it is
easy to see that the matrix inequality constraint,
Eq.~\eqref{Eq:SDP:1stLevel:Cst}, ensures that
\begin{equation*}
    z^\dagger\Gamma\,z=z^\dagger \left(F_0+\nu F_\nu+\sum_{k=1}^6 x_k F_k\right) z
    = -\mB_{3322}+\nu\,\id +\sum_{k=1}^6 x_k p_k = \text{SOS}
\end{equation*}
where $p_k$'s are defined in Eq.~\eqref{Eq:p_i:3322} and the last
equality follows from the positive semidefiniteness of $\Gamma$.

\subsection{Implementing Higher Level SDP Relaxations for $I_{3322}$}
\label{Sec:SDP:Higher}

In what follows, we will give a sketch of how the level 2 relaxation
for $I_{3322}$ inequality with $z$ given by Eq.~\eqref{Eq:z:1st+AB}
can be implemented as an SDP in the inequality form,
Eq.~\eqref{Eq:SDP:Vector}. Specifically, we want to write
Eq.~\eqref{Eq:SDP:abstract} as:
\begin{align}
    \text{minimize\ \ \ } &\qquad\qquad \nu,\nonumber\\
    \text{subject to\ \ } &F_0 + \nu F_\nu+\sum_k x_k F_k\ge 0,
    \label{Eq:SDP:SOS}
\end{align}
where, as with the lowest level relaxation, $F_0$ and $F_\nu$ are
real and symmetric matrices chosen such that
\begin{equation}\label{Eq:F0Fnu}
    z^\dagger\,F_0\,z=-\mB_{3322}, \quad z^\dagger \,F_\nu\,z=\id.
\end{equation}
Hereafter, we will focus on writing the second sum in
Eq.~\eqref{Eq:qSOS} as $\sum_k x_k z^\dagger F_k\,z$ for some
appropriate choice of Hermitian matrix $F_k$ where $x_k$ is some
variable to be optimized. As opposed to the lowest level relaxation,
the most general second level relaxation would require that each
$s_{ij}$ in Eq.~\eqref{Eq:qSOS} is a polynomial of degree at most 1.
Let $s_{ij}=\sum_k \lambda_{ijk}M_k$ where $M_k$ is the $k$-entry of
the vector $\mu=(\id,A^a_1,A^a_2,A^a_3,B^b_1,B^b_2,B^b_3)^\dagger$
which consists of all degree 1 or lower monomials that can be found
in $z$. For a fixed $i$ and $j$, we thus have
\begin{equation}\label{Eq:s_ij}
    s_{ij}^\dagger\,p_i\,s_{ij}
    =\sum_{k,l} M_k^\dagger\left(\lambda_{ijk}^*\lambda_{ijl}\right)p_i\,M_l
    =\mu^\dagger\Lambda_{ij}\, p_i\mu,
\end{equation}
where here $\lambda_{ijk}^*$ is the complex conjugate of
$\lambda_{ijk}$, $p_i\mu$ is a vector formed by multiplying each
entry of $\mu$ by $p_i$ and $\Lambda_{ij}$ is a $7\times7$ matrix
with its $(k,l)$-entry given by $\lambda_{ijk}^*\lambda_{ijl}$.
Clearly, as it is, $\Lambda_{ij}$ is a rank 1 but otherwise arbitrary
positive semidefinite matrix. Analogously, we see that if we further
perform a sum over $j$ in Eq.~\eqref{Eq:s_ij}, then we may write
$\sum_j s_{ij}^\dagger\,p_i\,s_{ij}=\mu^\dagger\Lambda_{i}\, p_i\mu$
where $\Lambda_i=\sum_j \Lambda_{ij}$ is now an arbitrary positive
semidefinite matrix. Moreover, the requirement of $\Lambda_i$ being
positive semidefinite can also be removed if we recall the fact that
in the case of $I_{3322}$ inequality, if $p_i$ is in $\cP$, so is
$-p_i$. Then, what remains to be done is to express $\sum_j
s_{ij}^\dagger\,p_i\,s_{ij}$ and hence $\mu^\dagger\Lambda_{i}\,
p_i\mu$ in the form $z^\dagger\Omega_i z$ for some Hermitian
$\Omega_i$. Evidently, the entries in $\Lambda_i$ will be related to
the entries in $\Omega_i$ linearly. For example, since
\begin{align*}
    [\Lambda_1]_{6,5}M_6^\dagger p_1 M_5
    =&[\Lambda_1]_{6,5}(B^b_2)^\dagger \left[(A^a_1)^2-(A^a_1)\right](B^b_1)\\
    =&[\Lambda_1]_{6,5}\left[(A^a_1B^b_2)^\dagger (A^a_1B^b_1)
    -(B^b_2)^\dagger (A^a_1B^b_1)\right],
\end{align*}
we may make the following identification
$[\Omega_1]_{9,8}=-[\Omega_1]_{6,8}=[\Lambda_1]_{6,5}$ and so on.
Each independent entry of $\Lambda_i$ therefore corresponds to an
independent optimization variable $x_k$ and some Hermitian matrix
$F_k$ via $\Omega_i$. In the example above, the $F_k$ corresponding
to (the real part of) $[\Lambda_1]_{6,5}$ would be zero everywhere
except for its $(6,8)$, $(9,8)$, $(8,6)$ and $(8,9)$ entry, which
reads as $-1,1,-1,1$ respectively. More intuitively, each of these
$F_k$'s gives rise to some polynomial identities such that $z^\dagger
F_k z=0$ whenever the constraints are satisfied. Putting everything
together, we see that the search for a positive semidefinite $\Gamma$
such that $\nu\,\id-\cB_{3322}-\sum_{ij} s_{ij}^\dagger\, p_i
s_{ij}=z^\dagger\Gamma z$ can also be written as the search for a
positive semidefinite $\Gamma$ such that
\begin{equation}
    \nu\,\id-\cB_{3322}-\sum_k x_k z^\dagger F_k z =z^\dagger\Gamma z,
\end{equation}
for some appropriate choice of $F_k$. Comparing this with
Eq.~\eqref{Eq:SDP:SOS} and Eq.~\eqref{Eq:F0Fnu}, we see that
evidently any higher order relaxation in our hierarchy can also be
implemented as an SDP.

\subsection{Yao's inequality}\label{App:Yao}

Here, we note that for $\{A_i,B_j,C_k\}_{i,j,k=1,2,3}$ satisfying the
commutation relations $[A_i,B_j]=0$, $[A_i,C_k]=0$, $[B_j,C_k]=0$ and
for $\Gamma$ given by Eq.~\eqref{Eq:Gamma:Blks}, we have
\begin{align}
    3\sqrt{3}\,\id - \mB_{\Yao} &= z^\dagger \Gamma z
    +\sum_{i,j,k} \alpha_{ijk}\left(\id - t_{ijk}^\dagger t_{ijk}\right)\nonumber\\
    &+\frac{1}{12\sqrt{3}}\sum_{i,j,k}\,^{'} \left(f^{(A)}_{ijk} +f^{(B)}_{ijk}
    +f^{(C)}_{ijk} + f^{(A)\,\dagger}_{ijk} +f^{(B)\,\dagger}_{ijk}+f^{(C)\,\dagger}_{ijk}\right),
    \label{Eq:Yao:SOS}
\end{align}
where $t_{ijk}:=A_iB_jC_k$,
\begin{gather*}
    \alpha_{ijk}=\left\{ \begin{array}{c@{\quad:\quad}l}
        \frac{1}{2\sqrt{3}}  & i\neq j\neq k,\\
        0 & i=j=k,\\
        \frac{1}{12\sqrt{3}} & \text{otherwise.}
        \end{array} \right.\quad
    f^{(l)}_{ijk}=\left\{ \begin{array}{c@{\quad:\quad}l}
        2t_{ijk}^\dagger t_{ikj} - t_{jjk}^\dagger t_{jkj}
        - t_{kjk}^\dagger t_{kkj}  & l=A,\\
        2t_{ijk}^\dagger t_{kji} - t_{iik}^\dagger t_{kii}
        - t_{ikk}^\dagger t_{kki}  & l=B,\\
        2t_{ijk}^\dagger t_{jik} - t_{ijj}^\dagger t_{jij}
        - t_{iji}^\dagger t_{jii}  & l=C,
        \end{array} \right.\quad
\end{gather*}
and the second sum $\sum\,^{'}$ is over all possible $i,j,k$ such
that $i\neq j\neq k$. In contrast with
Eq.~\eqref{Eq:Yao:SOSwoConstraints}, the above equality holds even if
none of the constraints $A_i^2=B_j^2=C_k^2=\id$ are satisfied.
Moreover, the above equality can also be cast in the form of
Eq.~\eqref{Eq:qSOS} with the help of identities such as
\begin{gather*}
    \id-t_{ijk}^\dagger t_{ijk}=
    p^{(C)}_k + C_k^\dagger\,p^{(B)}_jC_k + g_{jk}^\dagger\,p^{(A)}_i\,g_{jk},
\end{gather*}
and
\begin{align*}
    f^{(A)}_{ijk}+f^{(A)\,\dagger}_{ijk}
    &=2g_{jk}^\dagger\,p^{(A)}_ig_{jk} + 2g_{kj}^\dagger\,p^{(A)}_ig_{kj}
     - g_{jk}^\dagger\,\left(p^{(A)}_j+p^{(A)}_k\right)g_{jk}
     - g_{kj}^\dagger\,\left(p^{(A)}_j+p^{(A)}_k\right)g_{kj}\\
     &+ \left(g_{jk}+g_{kj}\right)^\dagger p^{(A)}_j\left(g_{jk}+g_{kj}\right)
     + \left(g_{jk}+g_{kj}\right)^\dagger p^{(A)}_k\left(g_{jk}+g_{kj}\right)\\
     &-2\left(g_{jk}+g_{kj}\right)^\dagger p^{(A)}_i\left(g_{jk}+g_{kj}\right),
\end{align*}
where $g_{jk}:=B_jC_k$.

\end{document}